 \newtheorem{thm}{Theorem}[section]
 \newtheorem{lemma}[thm]{Lemma}
 \newtheorem{prop}[thm]{Proposition}
 \theoremstyle{definition}
 \newtheorem{defn}[thm]{Definition}
 \theoremstyle{remark}
 \numberwithin{equation}{section}
 \def\idtyty{{\mathchoice {\mathrm{1\mskip-4mu l}} {\mathrm{1\mskip-4mu l}} %
{\mathrm{1\mskip-4.5mu l}} {\mathrm{1\mskip-5mu l}}}}
\newcommand{\caA}{{\mathcal A}}
\newcommand{\caB}{{\mathcal B}}
\newcommand{\caD}{{\mathcal D}}
\newcommand{\caG}{{\mathcal G}}
\newcommand{\caH}{{\mathcal H}}
\newcommand{\caK}{{\mathcal K}}
\newcommand{\caL}{{\mathcal L}}
\newcommand{\caM}{{\mathcal M}}
\newcommand{\caO}{{\mathcal O}}
\newcommand{\caS}{{\mathcal S}}
\newcommand{\caU}{{\mathcal U}}
\newcommand{\caV}{{\mathcal V}}
\newcommand{\bbC}{{\mathbb C}}
\newcommand{\bbE}{{\mathbb E}}
\newcommand{\bbN}{{\mathbb N}}
\newcommand{\bbP}{{\mathbb P}}
\newcommand{\bbR}{{\mathbb R}}
\newcommand{\bbZ}{{\mathbb Z}}
\newcommand{\ie}{{\it i.e.\/} }
\newcommand{\eg}{{\it e.g.\/} }
\newcommand{\iu}{\mathrm{i}}
\newcommand{\str}{^{*}}
\newcommand{\ep}[1]{\mathrm{e}^{#1}}
\newcommand{\Tr}{\mathrm{Tr}}
\newcommand{\abs}[1]{\left\vert #1 \right\vert}
\newcommand{\braket}[2]{\left\langle #1 , #2\right\rangle}
\DeclareMathOperator*{\slim}{s-lim}
\newcommand{\spec}{\mathrm{spec}}
\newcommand{\be}{\begin{equation}}
\newcommand{\ee}{\end{equation}}
\newcommand{\bea}{\begin{eqnarray}}
\newcommand{\eea}{\end{eqnarray}}
\newcommand{\beann}{\begin{eqnarray*}}
\newcommand{\eeann}{\end{eqnarray*}}
\newcommand{\Rchain}{{[1,\infty)}}
\newcommand{\Lchain}{{(-\infty,0]}}
\begin{document}

\renewcommand{\thefootnote}{\fnsymbol{footnote}}
\title[Symmetric phases]{On gapped phases with a continuous symmetry \\ and boundary operators}

\author[S. Bachmann]{Sven Bachmann}
\address{Department of Mathematics\\
University of California, Davis\\
Davis, CA 95616, USA}
\email{svenbac@math.ucdavis.edu}

\author[B. Nachtergaele]{Bruno Nachtergaele}
\address{Department of Mathematics\\
University of California, Davis\\
Davis, CA 95616, USA}
\email{bxn@math.ucdavis.edu}

\date{\today }

\footnotetext{Copyright \copyright\ 2013 by the authors. This
paper may be reproduced, in its entirety, for non-commercial
purposes.}

\begin{abstract}
We discuss the role of compact symmetry groups, $G$, in the classification of gapped ground state phases of quantum spin systems. 
We consider two representations of $G$ on infinite subsystems. First, in arbitrary dimensions, we show that the ground state spaces 
of models within the same $G$-symmetric phase carry equivalent representations of the group for each finite or infinite sublattice on 
which they can be defined and on which they remain gapped. This includes infinite systems with boundaries or with 
non-trivial topologies. Second, for two classes of one-dimensional models, by two different methods, for $G=SU(2)$ in one, and 
$G\subset SU(d)$, in the other we construct explicitly an `excess spin' operator that implements rotations of half of the infinite chain 
on the GNS Hilbert space of the ground state of the full chain. Since this operator is constructed as the limit of a sequence of observables,
the representation itself is, in principle, experimentally observable. We claim that the corresponding unitary representation of $G$ is 
closely related to the representation found at the boundary of half-infinite chains. We conclude with determining the precise relation 
between the two representations for the class of frustration-free models with matrix product ground states. 
\end{abstract}

\maketitle

\begin{center} \emph{Dedicated to Herbert Spohn} \end{center}

\vspace{0.5cm}

\section{Introduction}\label{sec:Intro}

There are good reasons for the intense activity around gapped ground states phases of quantum lattice models. Theoretically as well 
as experimentally it has become clear that ground states of many models in condensed matter physics often have interesting 
structural properties even in the absence of long-range order. This structure is sometimes referred to as 
{\em topological order} because of striking topological effects found in some important examples \cite{Wen:1995}. 
A quantum ground state phase is 
loosely defined as a family of models (Hamiltonians) of which the ground states have qualitatively similar properties. The recent 
literature provides ample support for the following statement: for a family of Hamiltonians depending on a continuous parameter
in some range, $\lambda\in I$, its ground states represent the same quantum phase if there is a positive lower bound for the spectral 
gap above the ground state energy independent of $\lambda\in I$. In the opposite case, a {\em quantum phase transition} may occur at a 
critical value of the parameter if some of the ground state properties undergo qualitative changes, and the spectral gap above the 
ground state necessarily vanishes at this critical value.

Although quantum phase transitions were already described a long time ago, for example in the XY model~\cite{Lieb:1961vs} or in the 
quantum Ising model in a transverse field~\cite{Pfeuty:1970wq}, and in the monograph \cite{Sachdev:1999}, the notion of a gapped 
ground state phase has only recently been thoroughly investigated~\cite{Hastings:2005cs, Kitaev:2009,Chen:2010gb, 
Chen:2011iq, Schuch:2011ve, Bachmann:2011kw, Bachmann:2012uu}. The classification of symmetry-protected phases has been the
subject of several studies. The result of such a classification will of course depend on the notion of equivalence that is employed. In most of the
literature the manipulations preserving the equivalence of systems includes embedding the Hilbert space of the original system as a subspace of
the Hilbert space of another system (extending the range of degrees of freedom), and also tensoring with another system (adding degrees of freedom)   
\cite{Hastings:2013wp}. This approach has led to appealing results \cite{Kitaev:2009,Schnyder:2009,Duivenvoorden:2013,Chen:2013}. In this article, 
however, we do not compare systems with different Hilbert spaces. What we find then is a stronger invariant, meaning that our results have a greater 
power to predict the existence of a quantum phase transition, i.e., the existence of at least one critical point in any parameter region between two 
values where the system is in two different gapped phases distinguished by different values of the invariant.

In the one-dimensional classes of examples we work out explicitly (see the excess spin operators in Section \ref{sec:Excess}), our invariant is 
the representation generated by unitary string operators, closely related to the string order parameters \cite{denNijs:1989vl}, acting on the ground 
states of a half-infinite chain. Remarkably, these unitaries can be approximated by local observables and, hence, are in principle experimentally 
accessible. In a very distinct way {\em non-local} unitaries can also be put to use to unravel the structure of certain symmetry-protected phases, as 
was first done by Kennedy and Tasaki in \cite{Kennedy:1992ta}. See \cite{Duivenvoorden:2013b,Else:2013} for several generalizations.

An important consequence of the existence of a smooth interpolating path connecting two Hamiltonians along which the spectral gap 
does not vanish, is the local automorphic equivalence \cite{Bachmann:2011kw} of the ground state spaces of the models along the path.
In \cite{Bachmann:2012bf, Bachmann:2012uu} we showed the existence of such a path connecting a class of generalizations of the 
AKLT chain \cite{Affleck:1988vr, Tu:2008bq} to a class of toy models we called PVBS Hamiltonians, and emphasized the role of edge states.
Further physical and mathematical properties shared by models in the same phase remain largely conjectural. For example, the definition of a 
gapped ground state phase is expected to encapsulate the common structure of entanglement within the system, as embodied by the so-called 
entanglement spectrum in these systems which are expected to satisfy an area law for the entanglement entropy~\cite{Turner:2011,Michalakis:2012wq}. 
More generally, models within the same phase should exhibit the same topological order. Automorphically equivalent models have the same ground 
state degeneracy on any lattice they can be defined on, and if a symmetry is preserved along the path, they exhibit the same pattern of symmetry 
breaking. Moreover, models in the same phase are even expected to share a similar structure of their low-lying excitations, as for example quasi-
particles with anyonic properties in the case of topologically ordered phases~\cite{Kitaev:2003ul}. The mathematical study of these question 
has only just begun \cite{Naaijkens:2011,Levin:2012,Naaijkens:2013, Haegeman:2013}.

In this article, we study gapped quantum phases with symmetry, in particular a local continuous symmetry such as the invariance under spin
rotations of antiferromagnetic spin chains.  The assumption of a non-vanishing gap implies, under rather general conditions, that 
the symmetry is necessarily unbroken and consequently that the ground state does not have the associated long-range 
order \cite{Landau:1981gg}. 

Concretely, we will prove some implications of the automorphic equivalence of $G$-invariant models, where $G$ is a group of local symmetries
of the Hamiltonian.
That is, we assume that the $G$-invariance is preserved along the paths connecting different models in what could be called the same
{\em symmetry-protected gapped phase}, in analogy with the {\em symmetry-protected topological order} that may be present and detectable in such a 
situation \cite{Pollmann:2012wv, Pollmann:2012,Haegeman:2012bx, Chen:2013}. It has already been argued in~\cite{Bachmann:2012bf} that the bulk 
ground state is 
not sufficient to describe the full range of possible phases and phase transitions even in one dimension, and that the structure of edge states carries additional 
information. In the presence of a symmetry which remains unbroken in the thermodynamic limit, we prove that these edge modes, namely the 
additional ground states that arise whenever a boundary is present, carry equivalent representations of $G$ within a phase. The argument is 
valid in any dimension and for arbitrary lattices including domains with holes or non-trivial topology. The set of these edge representations can therefore 
be used as a new general invariant of a gapped ground state phase.

The results of Section~\ref{sec:GTransport} are not concerned with a particular model, but are statements about properties shared by all models within a symmetric gapped ground state phase, thereby clarifying the meaning and implications of the very notion of a phase.

In one dimension and for the special case $G\subset SU(2)$, the abstract result can be supported by the construction of an explicit 
`excess spin' operator generating the rotations of the half chain in the GNS Hilbert space of the bulk state. This experimentally accessible observable provides a way to measure the abstract invariant of a phase in a particular system. For models in the Haldane phase, namely for integer spin antiferromagnets with a unique gapped ground state in the thermodynamic limit and exponential decay 
of correlations, the existence of a half-integer edge spin is readily apparent in the AKLT model and was experimentally observed in~\cite{Hagiwara:1990wk}. We show by construction that the excess spin can be approximated by local observables in two particular cases: for frustration free models having a finitely correlated ground state (in this situation our approach works for any $G\subset SU(d)$), and for a class of antiferromagnetic chains whose ground state has a classical random loop representation. Isotropic half-integer antiferromagnetic spin chains are expected to either belong to a a gapless phase or to a dimerized phase, breaking translation invariance~\cite{Affleck:1986wu}. The results of this article can be applied to dimerized phases but we will not discuss the implications in detail here. Similarly, the XY chain does generically not carry an interesting local symmetry except in the isotropic case in which it becomes gapless.

In Section~\ref{sec:boundary-bulk} we show how the abstract boundary representation and the excess spin operator 
are related in the case of frustration free models. In the context of frustration free spin chains with matrix product ground states we 
prove an exact relation between these representations and, as a consequence, show that the representation that characterizes the 
edge ground state space can be determined from the correlations in the bulk. This can be seen as a simple instance of \emph{bulk-edge correspondence} 
in the setting of quantum spin models, similar to the bulk-edge correspondence in quantum Hall systems or topological insulators.


\section{Symmetric gapped phases}\label{sec:GTransport}

We consider a quantum spin system with local Hilbert space $\caH^x=\bbC^d$ at each site $x\in\Gamma$. Here, $\Gamma$
is a countable metric space, e.g., a finite-dimensional lattice with the lattice distance as the metric.
The local algebra of observables on any finite subset $\Lambda\subset\Gamma$ is the full matrix algebra 
$\caA^{\Lambda} := \caL(\otimes_{x\in\Lambda}\caH^x)$ and, for $\Lambda_1\subset \Lambda_2$, we consider 
$\caA^{\Lambda_1}$ as a subalgebra of $ \caA^{\Lambda_2}$, in the natural way. For infinite $\Gamma$, the algebra 
of quasi-local observables, $\caA^{\Gamma}$, is the norm completion of $\bigcup_{\Lambda\subset \Gamma}\caA^{\Lambda}$,
where the union is over the finite subsets of $\Gamma$. By the limit $\Lambda\to\Gamma$, we mean the limit along an increasing and absorbing sequence of finite subsets $\Lambda_n\subset\Gamma$.

We shall require some local structure for the set $\Gamma$. First a uniform bound on the rate at which balls grow: There exist numbers $\kappa >0$ and $\nu >0$ for which $\sup_{x \in \Gamma} |B_r(x) | \leq \kappa r^{\nu}$, where $|B_r(x)|$ is the cardinality of the ball centered at $x$ of radius $r$. Second, we assume that $\Gamma$ has some underlying `integrable' structure. There exists a non-increasing, real-valued function $F: [0, \infty) \to (0, \infty)$ that satisfies a uniform summability condition
\begin{equation*}
\sup_{x \in \Gamma} \sum_{y \in \Gamma} F(d(x,y)) < \infty,
\end{equation*}
and a convolution condition, namely there is a constant $C_F$ such that
\begin{equation*}
\sum_{z \in \Gamma} F(d(x,z)) F(d(z,y) \leq C_F F(d(x,y)),
\end{equation*}
for any $x,y\in\Gamma$. Finally, we will assume that $F$ does not decay to zero faster than polynomially (see~\cite{Bachmann:2011kw} for a possible specific condition).

In this article, we shall consider only compact Lie groups $G$ as local symmetries. Although this choice corresponds to the examples we have in 
mind, \eg quantum mechanical rotations $G=SU(2)$, we note that not all arguments require compactness. We assume that $\caH^x$ carries a
 unitary representation of $G$, $g\mapsto U^x_g\in SU(d)$, which determines the adjoint action $\theta^x_g := \mathrm{Ad}_{U^x_g}$ on $\caA^{\{x\}}$. 
 It naturally extends to an action $\theta^\Lambda_g = \otimes_{x\in\Lambda}\theta^x_g$ over $\caA^{\Lambda}$ and further to $\theta_g^\Gamma$ on 
 the quasi-local algebras $\caA^\Gamma$. Concretely, 
for finite subsets $\Lambda$,
\begin{equation*}
U^\Lambda_g = \otimes_{x\in\Lambda}U^x_g,
\end{equation*}
but the automorphism $\theta_g$ in general fails to remain inner on the quasi-local algebra.

By a local Hamiltonian of the spin chain, we mean a map $\Lambda\subset\Gamma \rightarrow H^\Lambda = (H^\Lambda)\str \in\caA^\Lambda$ defined on finite subsets of the lattice and of the form
\begin{equation}
H^\Lambda = \sum_{X\subset\Lambda}\Phi(X),
\end{equation}
where $\Phi(X) = \Phi(X)\str = \caA^X$ represents the interactions among the spins located in $X$. The following condition expresses the decay of the interaction at large distances,
\begin{equation}\label{Norm}
\Vert\Phi\Vert_\mu:=\sup_{x,y \in \Gamma} \frac{\ep{\mu d(x,y)}}{F(d(x,y))} 
\sum_{\stackrel{Z \subset \Gamma:}{x,y \in Z}} \| \Phi_Z \| < \infty,
\end{equation}
for a $\mu>0$. With this, the Heisenberg dynamics corresponding to $H^\Lambda$, $\tau^\Lambda_t(A) = \exp(-\iu t H^\Lambda) A \exp(\iu t H^\Lambda)$ satisfies a Lieb-Robinson bound
\begin{equation*}
\Vert [\tau^\Lambda_t(A),B] \Vert \leq C(A,B)\ep{-\mu (d(Z_1,Z_2)-v(\Phi)t)}
\end{equation*}
where $A\in\caA^{Z_1}$, $B\in\caA^{Z_2}$ and $d(Z_1,Z_2)$ is the distance between the supports of $A$ and $B$. See \eg~\cite{Nachtergaele:2006bh} for explicit expressions for the constant $C(A,B)$ and the Lieb-Robinson velocity $v(\Phi)$. 

Besides the dynamical condition expressed by the Lieb-Robinson bound, the Hamiltonians we shall consider here also satisfy a spectral condition, namely that they are gapped. Precisely,
\begin{defn}\label{def:gapped}
Let $\lambda_0(\Lambda) = \inf\spec(H^\Lambda)$. The model is gapped if there exists a constant $\gamma >0$ and a family $0\leq \epsilon_\Lambda<\gamma$ such that $\lim_{\Lambda\to\Gamma} \epsilon_\Lambda =0$, and
\begin{equation*}
\spec(H^\Lambda) \cap (\lambda_0(\Lambda) + \epsilon_\Lambda, \lambda_0(\Lambda) + \gamma) =\emptyset, \quad \mbox{for all } \Lambda.
\end{equation*}
\end{defn} 
We shall refer to the eigenstates corresponding to the spectral patch $[\lambda_0(\Lambda),\lambda_0(\Lambda) + \epsilon_\Lambda]$ as ground states and denote the set of such states as $\caG^\Lambda$, despite the fact that they become strictly degenerate only in the limit $\Lambda\to\Gamma$. More importantly, these low-lying excitations are isolated from the rest of the spectrum by a positive gap, uniformly in $\Lambda$. Among the models that we have in mind is the spin-$1$ Heisenberg antiferromagnet in one dimension, which is known numerically to exhibit exactly the above behavior, see~\cite{Kennedy:1999ty}. There, the eigenvalue splitting $\epsilon_\Lambda$ between the spin singlet and the triplet decays exponentially.

Finally, we define precisely what we mean by a gapped ground state phase carrying a symmetry for a quantum spin system having all of the above structure. We consider a differentiable family of interactions $\Phi(s)$ to which we associate $\partial\Phi(Z,s)=\vert Z\vert \Phi'(Z,s)$,
where the prime denotes differentiation with respect to $s$, for each finite $Z \subset \Gamma$, and assume that
\begin{equation*}
\Vert\partial\Phi\Vert_\mu < \infty,
\end{equation*}
as in~(\ref{Norm}), uniformly in $s$.

\begin{defn}\label{def:phase}
Let $G$ be a group. Two gapped local Hamiltonians $H_0$ and $H_1$ are in the same $G$-symmetric gapped ground state phase if
\begin{enumerate}
\item There exists a smooth family of gapped local Hamiltonians $[0,1]\ni s\mapsto H(s)$ such that $H_0 = H(0)$ and $H_1 = H(1)$;
\item The interaction $\Phi(s)$ is invariant under the action of $G$,
\begin{equation*}
\theta_g(\Phi(Z,s)) = \Phi(Z,s)
\end{equation*}
for all $g\in G$, $Z \subset \Gamma$, and $s\in[0,1]$.
\end{enumerate}
\end{defn} 
Let $\caS^\Gamma(s)\subset(\caA^\Gamma)\str$ be the set of ground states of the model on $\Gamma$, \ie the set of states on $\caA^\Gamma$ that are accumulation points of the functionals $\braket{\psi^\Lambda}{\cdot \ \psi^\Lambda}$, where $\psi^\Lambda$ is a normalized state in $\caG^\Lambda(s)$.

As shown in~\cite{Bachmann:2011kw}, the conditions of Definition~\ref{def:phase} allow for the explicit construction of a cocycle of quasi-local automorphisms $\alpha^\Gamma_{s,t}$ of $\caA^\Gamma$ such that
\begin{equation} \label{GSphase}
\caS^\Gamma(t) = \caS^\Gamma(s) \circ \alpha^\Gamma_{s,t},
\end{equation}
which justifies the denomination of a ground state phase. Precisely, $\alpha^\Gamma_{s,t}$ is obtained as the limit as $\Lambda\to\Gamma$ of a unitary conjugation, where the unitaries are generated by the $s$-dependent
\begin{equation*}
D^\Lambda(s) = \int_\bbR d\xi w(\xi)\int_0^\xi d\zeta\ep{-\iu \zeta H^\Lambda(s)} \frac{d}{ds}H^\Lambda(s) \ep{\iu \zeta H^\Lambda(s)}.
\end{equation*}
Here, $w$ is a probability density decaying faster than any polynomial at infinity and whose Fourier transform is compactly supported in $[-\gamma,\gamma]$. As a consequence, $\alpha^\Gamma_{s,t}$ is norm preserving,
\begin{equation} \label{NormPreserving}
\Vert \alpha^\Gamma_{s,t}(A) \Vert = \Vert A \Vert,
\end{equation}
and the $G$-invariance of the interaction carries over to the automorphisms $\alpha^\Gamma_{s,t}$, namely
\begin{equation} \label{AutomorphicInvariance}
\alpha^\Gamma_{s,t} \circ \theta^\Gamma_g = \theta^\Gamma_g\circ\alpha^\Gamma_{s,t},
\end{equation}
for all $s,t\in[0,1]$ and $g\in G$.

In the sequel, we shall mostly be interested in the dual action of the automorphisms $\alpha^\Gamma_{s,t}$ and $\theta^\Gamma_g$, that we denote by $\beta^\Gamma_{s,t}$ and $\Theta^\Gamma_g$, namely
\begin{equation*}
\beta^\Gamma_{s,t}(l)(A) := l\left(\alpha^\Gamma_{s,t}(A)\right) ,\qquad \Theta^\Gamma_g(l)(A) := l\left(\theta^\Gamma_g(A)\right),
\end{equation*}
for any $l\in(\caA^\Gamma)\str$. 

We are now ready to state the first main result of this article.
\begin{thm}\label{thm:G-rep}
Let $H_0$ and $H_1$ be in the same $G$-symmetric gapped ground state phase. For $i=0,1$, let $\bar{\caS}^\Gamma_i$ be the vector spaces spanned by elements of $\caS^\Gamma_i$. Then, for each $\Gamma$,
\begin{enumerate}
\item $\mathrm{dim}(\bar{\caS}^\Gamma_0) = \mathrm{dim}(\bar{\caS}^\Gamma_1)$, in the sense that if one is finite-dimensional, the other is too and with the same dimension,
\item $\bar{\caS}^\Gamma_i$ carry a representation $\Theta^\Gamma_{g;i}$ of $G$,
\item $\Theta^\Gamma_{g;0}$ and $\Theta^\Gamma_{g;1}$ are equivalent.
\end{enumerate}
\end{thm}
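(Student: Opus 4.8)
The plan is to leverage the automorphic equivalence~\eqref{GSphase} together with the intertwining relation~\eqref{AutomorphicInvariance} to transport the $G$-representation from one end of the path to the other. First I would establish the statement at the level of the dual action: by~\eqref{GSphase}, the map $\beta^\Gamma_{0,1}$ restricts to a bijection $\caS^\Gamma_1 \to \caS^\Gamma_0$, and since $\alpha^\Gamma_{0,1}$ is a linear (in fact norm-preserving, by~\eqref{NormPreserving}) automorphism of $\caA^\Gamma$, its dual $\beta^\Gamma_{0,1}$ is linear and weak-$*$ continuous on $(\caA^\Gamma)^*$; hence it maps the weak-$*$ closed span $\bar{\caS}^\Gamma_1$ isomorphically onto $\bar{\caS}^\Gamma_0$ as topological vector spaces. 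This immediately gives part (i): if one span is finite-dimensional, so is the other, with equal dimension, because a linear isomorphism preserves dimension. I would take a moment here to note that $\beta^\Gamma_{0,1}$ being norm-preserving on observables does not a priori mean it is bounded on the dual with the dual norm restricted to spans — but linearity plus the explicit bijection on the (spanning) set $\caS^\Gamma_1$ is all that is needed for the algebraic statement of dimension equality, so boundedness is not an issue.

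For part (ii), the representation $\Theta^\Gamma_{g;i}$ on $\bar{\caS}^\Gamma_i$ is defined simply as the restriction of the dual action $\Theta^\Gamma_g$ to this subspace. The key point to check is that $\bar{\caS}^\Gamma_i$ is invariant under $\Theta^\Gamma_g$ for every $g\in G$. This follows because the set of ground states $\caS^\Gamma_i$ is itself invariant under $\Theta^\Gamma_g$: the Hamiltonian $H^\Lambda_i$ is built from a $G$-invariant interaction, so $\theta^\Lambda_g$ commutes with $H^\Lambda_i$, hence maps $\caG^\Lambda_i$ to itself, and taking accumulation points shows $\Theta^\Gamma_g(\caS^\Gamma_i) = \caS^\Gamma_i$; invariance of the span is then automatic. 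That $g\mapsto \Theta^\Gamma_{g;i}$ is a group homomorphism is inherited from $\theta^\Gamma$ being an action; continuity in $g$ (if one wants a continuous, hence by compactness completely reducible, representation) should follow from strong continuity of $g\mapsto\theta^\Gamma_g$ on $\caA^\Gamma$, which in turn follows from the assumed continuity of $g \mapsto U^x_g$ together with a standard $\epsilon/3$ argument on the quasi-local algebra.

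For part (iii), the equivalence, I would show that the isomorphism $T := \beta^\Gamma_{0,1}|_{\bar{\caS}^\Gamma_1} : \bar{\caS}^\Gamma_1 \to \bar{\caS}^\Gamma_0$ intertwines the two representations, i.e.\ $\Theta^\Gamma_{g;0}\circ T = T\circ\Theta^\Gamma_{g;1}$. Unwinding definitions, for $l\in(\caA^\Gamma)^*$ and $A\in\caA^\Gamma$ one has $(\Theta^\Gamma_g\circ\beta^\Gamma_{0,1})(l)(A) = l(\alpha^\Gamma_{0,1}(\theta^\Gamma_g(A)))$ while $(\beta^\Gamma_{0,1}\circ\Theta^\Gamma_g)(l)(A) = l(\theta^\Gamma_g(\alpha^\Gamma_{0,1}(A)))$, and these agree precisely because of~\eqref{AutomorphicInvariance}. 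Thus $T$ is an intertwiner, and since it is a linear isomorphism, $\Theta^\Gamma_{g;0}$ and $\Theta^\Gamma_{g;1}$ are equivalent.

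The main obstacle, and the place I would be most careful, is the interchange of limits and closures implicit throughout: the automorphisms $\alpha^\Gamma_{s,t}$ are themselves only obtained as a limit $\Lambda\to\Gamma$ of inner automorphisms, so verifying that $\beta^\Gamma_{0,1}$ genuinely maps accumulation points of finite-volume ground states to accumulation points, and that the relation~\eqref{GSphase} holds as an honest bijection of the state sets $\caS^\Gamma$ (not just up to closure), requires invoking the construction in~\cite{Bachmann:2011kw} rather carefully. In particular, one should confirm that $\theta^\Gamma_g$ preserves $\caA^\Gamma$ (it does, being a limit of the inner $\theta^\Lambda_g$) so that~\eqref{AutomorphicInvariance} makes sense on the quasi-local algebra, and that the relevant topology on the dual spaces in which everything is continuous is the weak-$*$ topology — under which both $\Theta^\Gamma_g$ and $\beta^\Gamma_{0,1}$ are continuous — so that passing to closed spans is legitimate. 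Once these functional-analytic points are pinned down, the algebraic heart of the argument is the one-line computation using~\eqref{AutomorphicInvariance}.
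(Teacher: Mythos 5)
Your proposal is correct and follows essentially the same route as the paper: part (i) via the linear bijection $\beta^\Gamma_{0,1}$ between spans (the paper isolates this as Proposition~2.3), part (ii) via invariance of $\caG^\Lambda$ under $U^\Lambda_g$ passed to weak-$*$ accumulation points, and part (iii) via the one-line intertwining computation from~\eqref{AutomorphicInvariance}. (Minor note: the direction of the bijection in your (i) is stated as $\caS^\Gamma_1\to\caS^\Gamma_0$, whereas~\eqref{GSphase} gives $\beta^\Gamma_{0,1}:\caS^\Gamma_0\to\caS^\Gamma_1$; this is immaterial by the cocycle property.)
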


From now on, $H(s)$, respectively $\Phi(s)$, will denote a smooth family of $G$-invariant gapped Hamiltonians, respectively interactions, interpolating from $H_0$ to $H_1$ and satisfying all requirements described above. The key of the proof of the theorem is the following observation about the maps $\beta^\Gamma_{s,t}$ and their restriction to the ground state spaces.
\begin{prop}\label{prop:homeomorphism}
Equip $\caS^\Gamma(s)$ with the norm topology. Then $\caS^\Gamma(s_1)$ and $\caS^\Gamma(s_2)$ are homeomorphic for any $s_1,s_2\in[0,1]$.
\end{prop}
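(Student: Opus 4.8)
The plan is to show that $\beta^\Gamma_{s,t}$ restricts to a bijection between $\caS^\Gamma(s)$ and $\caS^\Gamma(t)$ which is a homeomorphism for the norm topology; taking $(s,t)=(s_1,s_2)$ then gives the statement. The starting point is equation~\eqref{GSphase}, $\caS^\Gamma(t) = \caS^\Gamma(s)\circ\alpha^\Gamma_{s,t}$, which immediately says that $\beta^\Gamma_{s,t}$ maps $\caS^\Gamma(s)$ onto $\caS^\Gamma(t)$: indeed $\beta^\Gamma_{s,t}(l) = l\circ\alpha^\Gamma_{s,t}$, and as $l$ runs over $\caS^\Gamma(s)$ the functional $l\circ\alpha^\Gamma_{s,t}$ runs over $\caS^\Gamma(t)$ by definition. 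The cocycle property $\alpha^\Gamma_{t,s}\circ\alpha^\Gamma_{s,t}=\mathrm{id}$ (which is what "cocycle of automorphisms" buys us) shows that $\beta^\Gamma_{t,s}$ is a two-sided inverse, so $\beta^\Gamma_{s,t}\colon\caS^\Gamma(s)\to\caS^\Gamma(t)$ is a bijection.

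Next I would check continuity. The metric on the state space $(\caA^\Gamma)\str$ is $\|l_1-l_2\| = \sup_{\|A\|\le 1}|l_1(A)-l_2(A)|$. For $l_1,l_2\in\caS^\Gamma(s)$,
\begin{equation*}
\norm{\beta^\Gamma_{s,t}(l_1) - \beta^\Gamma_{s,t}(l_2)} = \sup_{\norm{A}\le 1}\abs{l_1(\alpha^\Gamma_{s,t}(A)) - l_2(\alpha^\Gamma_{s,t}(A))} = \sup_{\norm{B}\le 1}\abs{l_1(B)-l_2(B)} = \norm{l_1-l_2},
\end{equation*}
where the middle equality uses that $\alpha^\Gamma_{s,t}$ is a norm-preserving bijection of the unit ball of $\caA^\Gamma$ onto itself, by~\eqref{NormPreserving} together with surjectivity of the automorphism. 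So $\beta^\Gamma_{s,t}$ is in fact an isometry on $\caS^\Gamma(s)$, hence continuous; the same argument applied to $\beta^\Gamma_{t,s}$ gives continuity of the inverse. Therefore $\beta^\Gamma_{s,t}$ is an isometric homeomorphism $\caS^\Gamma(s)\to\caS^\Gamma(t)$.

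The one point that is not completely formal, and which I would want to state carefully, is the surjectivity of $\alpha^\Gamma_{s,t}$ as a map of $\caA^\Gamma$ onto itself (so that $B$ really does range over the full unit ball when $B=\alpha^\Gamma_{s,t}(A)$); this is where one uses that $\alpha^\Gamma_{s,t}$ is an automorphism of the quasi-local algebra and not merely an isometric endomorphism — and it is again the cocycle/invertibility statement from~\cite{Bachmann:2011kw} that provides this. Strictly, $\alpha^\Gamma_{s,t}$ is obtained as a limit of inner automorphisms, so I would invoke the results recalled in the paragraph around~\eqref{GSphase} to assert it is a genuine automorphism with inverse $\alpha^\Gamma_{t,s}$, rather than re-derive this. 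Given that input, the proof is short. The main obstacle is thus not a hard estimate but making sure the bijectivity of $\alpha^\Gamma_{s,t}$ and the precise meaning of "accumulation points of $\braket{\psi^\Lambda}{\cdot\,\psi^\Lambda}$" in the definition of $\caS^\Gamma(s)$ interact correctly with~\eqref{GSphase}; once the isometry identity above is in place, homeomorphism is immediate and, as a bonus, one has proved the stronger statement that $\caS^\Gamma(s_1)$ and $\caS^\Gamma(s_2)$ are isometric.
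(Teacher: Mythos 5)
Your proof is correct and follows essentially the same route as the paper's: identify $\beta^\Gamma_{s_1,s_2}\upharpoonright_{\caS^\Gamma(s_1)}$ as the candidate homeomorphism with range $\caS^\Gamma(s_2)$ by \eqref{GSphase}, use the cocycle property to get the inverse $\beta^\Gamma_{s_2,s_1}$, and use \eqref{NormPreserving} (plus surjectivity of the automorphism $\alpha^\Gamma_{s_1,s_2}$) to get norm-preservation and hence continuity of both maps. The only difference is cosmetic — you phrase the estimate as an isometry on differences $l_1-l_2$ while the paper computes $\Vert\beta^\Gamma_{s_1,s_2}(\omega)\Vert=\Vert\omega\Vert$ and invokes linearity, which is the same thing — and you are somewhat more explicit about the surjectivity point that the paper leaves implicit.
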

\begin{proof}
We claim that $\beta^\Gamma_{s_1,s_2}\upharpoonright_{\caS^\Gamma(s_1)}$ is the sought homeomorphism. By construction, its range is $\caS^\Gamma(s_2)$. Since it is a linear map, $\beta^\Gamma_{s_1,s_2}(\omega) = \omega \circ \alpha^\Gamma_{s_1,s_2}$, continuity follows from boundedness:
\begin{equation*}
\Vert \beta^\Gamma_{s_1,s_2}(\omega) \Vert = \sup_{\Vert A \Vert=1}\Vert \omega(\alpha^\Gamma_{s_1,s_2}(A)) \Vert = \Vert \omega \Vert
\end{equation*}
by~(\ref{NormPreserving}). Since $\alpha^\Gamma$ is a cocycle, its dual is invertible with $(\beta^\Gamma_{s_1,s_2})^{-1} = \beta^\Gamma_{s_2,s_1}$. Hence the inverse is also continuous.
\end{proof}

Although we shall not use this in the sequel, we note that $\beta^\Gamma_{s_1,s_2}$ is also weakly continuous in the `time variable'. Indeed, by the cocycle property,
\begin{equation*}
\abs{\left(\beta^\Gamma_{s_1,s_3}(\omega) - \beta^\Gamma_{s_1,s_2}(\omega)\right)(A)}
= \abs{\beta^\Gamma_{s_1,s_2}(\omega)\left(\alpha^\Gamma_{s_1,s_3-s_2}(A)-A\right)}\leq \Vert \alpha_{s_1,s_3-s_2}^\Gamma(A)-A \Vert,
\end{equation*}
and the conclusion follows from the continuity of $t\mapsto \alpha_{s,t}^\Gamma$.

\begin{proof}[Proof of Theorem~\ref{thm:G-rep}]
i. It suffices to note that $\beta^\Gamma_{s_1,s_2}\upharpoonright_{\bar{\caS}^\Gamma(s_1)}$ is a linear bijective map from $\bar{\caS}^\Gamma(s_1)$ to $\bar{\caS}^\Gamma(s_1)$, as shown in the proof above. \\
ii. We drop the index $i$, and show that $\caS^\Gamma$ is invariant under the action of $G$. Since the Hamiltonian $H^\Lambda$ commutes with the representation $U^\Lambda_g$, $\psi^\Lambda\in\caG^\Lambda$ implies $U^\Lambda_g\psi^\Lambda\in\caG^\Lambda$. This implies that the set of functionals $\omega_{\psi^\Lambda}:=\braket{\psi^\Lambda}{\cdot\,\psi^\Lambda}/\Vert\psi^\Lambda\Vert^2$ is invariant under $\Theta^\Lambda_g$. By going to subsequences, any $\omega^\Gamma\in\caS^\Gamma$ is the weak-* limit of a $\omega^\Lambda\in\caS^{\Lambda}$. Hence,
\begin{equation*}
\Theta^\Gamma_g(\omega)(A) = \omega^\Gamma(\theta^\Gamma_g(A)) = \lim_{\Lambda\to\Gamma}\omega^\Lambda(\theta^\Gamma_g(A))
\end{equation*}
for any $A\in\caA^{\Lambda}$. But by the first remarks, $\omega^\Lambda\circ \theta^\Gamma_g \in \caS_{\Lambda}$. Therefore, $\Theta^\Gamma_g\caS^\Gamma\subset\caS^\Gamma$. Extending this action by linearity to $\bar{\caS}^\Gamma$ yields a representation. \\
iii. Let $\omega\in\caS^\Gamma_0$. By definition, we have that
\begin{equation*}
\beta^\Gamma_{0,1}\left(\Theta^\Gamma_{g,0}(\omega)\right)(A)
= \omega\left(\theta^\Gamma_g\circ \alpha^\Gamma_{0,1}(A)\right)
= \omega\left(\alpha^\Gamma_{0,1}\circ\theta^\Gamma_g(A)\right)
= \Theta^\Gamma_{g,1}\left(\beta^\Gamma_{0,1}(\omega)\right)(A)
\end{equation*}
where we used the covariance~(\ref{AutomorphicInvariance}) in the second equality. This relation and 
Proposition~\ref{prop:homeomorphism} imply that $\beta^\Gamma_{0,1}$ is an intertwining isomorphism between the 
representations $\bar{\caS}^\Gamma(0)$ and $\bar{\caS}^\Gamma(1)$.
\end{proof}
The one-dimensional VBS models, such as the AKLT model \cite{Affleck:1988vr} and its many generalizations, provide a rich class of examples for 
which the representation $\Theta^\Gamma$, with $\Gamma$ the left or right half-infinite chain, is explicitly known. For examples, in the case of the $SU(2)$-invariant spin-1 AKLT chain, the representation is the adjoint 
representation of the spin-1/2 representation, which is equivalent to the direct sum of a singlet and a triplet.


\section{The excess spin operator}\label{sec:Excess}

As can be readily seen,  the results of the previous section rely neither on the Lie structure nor on the compactness of $G$, and the representations 
$\Theta_g^\Gamma$ that have been discussed so far have been of a general group, in any dimension and at the abstract level of sets of ground states. 
Now, we restrict our attention to one-dimensional systems and shall use both the Lie group structure and the fact that the representations form connected 
subgroups of $SU(d)$. The possible infinitely extended lattices are, beside $\bbZ$, the two half-infinite chains with one boundary $\Rchain$ and $\Lchain$. 
We will write $\caA^\Gamma$ whenever we refer to any of the quasi-local algebras. In this more specific case, it is possible to construct an explicit unitary 
implementation of the group action on the half of the infinite chains for two classes of symmetric models.

The unitaries we shall construct find their origin in the special case of integer spin representations of $G=SU(2)$, 
for antiferromagnetic spin chains. There, despite the absence of long-range order in the classical sense, it was observed early, \eg in~\cite{denNijs:1989vl, Girvin:1989uk, Kennedy:1992ta}, that the following `string order parameter'
\begin{equation*}
O_{x,y} = (-1)^{y-x} \omega\left(S^x \ep{\iu\pi\sum_{j=x+1}^{y-1}S^j} S^y\right)
\end{equation*}
is a good characterization of the dilute N\'eel order in the ground state of the AKLT chain. {\em Dilute N\'eel order} refers to property that in any finite 
spin configuration that appears with non-zero probability in the ground state, $+1$'s and $-1$'s strictly alternate (hence, {\em N\'eel} order), but the 
$\pm1$'s are separated by a random number of $0$'s (hence, {\em dilute}). It was already pointed out in~\cite{Aizenman:1994th} that the exponential 
of the formal sum $\sum_{j=1}^\infty S^j$ which generates rotations of the half-infinite chain can be expected to converge in the GNS representation of a bulk state exhibiting sufficient decay of correlations, and that it is a robust generalization of the string order parameter. Moreover, this operator clarifies the origin of the edge states carrying a half-integer spin that have been observed in some spin-$1$ chains~\cite{Hagiwara:1990wk}, and is related to the representations that have been discussed in the previous section.

There are two classes of models for which such boundary operators can be shown to exist. The first class is that of models whose ground states have a valence bond representation \cite{Affleck:1988vr}, for which the boundary operator arises quite naturally and for arbitrary Lie groups. This is the family of frustration free chains that can also be described as finitely correlated states \cite{Fannes:1992vq}. The second family of models carry a $G=SU(2)$ symmetry and have ground states which have a stochastic-geometric interpretation, see~\cite{Aizenman:1994th, Nachtergaele:1993wk, Ueltschi:2013uy}. We now address both situations.

\subsection{Excess spin on finitely correlated chains}\label{sub:FCS}

We start by recalling some basic facts about finitely correlated states -- also known as matrix product states -- that are invariant under a gauge group, see~\cite{Fannes:1992vq}. We consider a purely generated finitely correlated state $\omega$ given 
in its `minimal representation' (not to be confused with the algebra representations we discussed so far), see~\cite{Fannes:1994511}. The auxiliary algebra is $\caB = \caM_k$, and $\omega$ is generated by the triple $(\bbE,\rho,\idtyty)$, where $\bbE_A(b) = V \str (A\otimes b) V$ and $V:\bbC^k\rightarrow \bbC^d\otimes\bbC^k$.

If $\omega$ is invariant under the local gauge group $G$, then both $\bbC^k$ and $\bbC^d$ carry a unitary representation of $G$ and $V$ can be chosen to be the following natural isometric intertwiner
\begin{equation}\label{intertwine}
(U_g\otimes u_g) V = V u_g,\quad \text{and hence}\quad V\str (U_g\otimes u_g) V = u_g.
\end{equation}
In other words, $u_g$ is an eigenvector of eigenvalue $1$ for $\bbE_{U_g}$. For the neutral element in $G$, this implies that $\bbE_\idtyty(\idtyty) = \idtyty$.

If the identity is the unique eigenvector for the eigenvalue $1$ and
\begin{equation}\label{FCS_Spectrum}
\lambda_e := \max (\abs{\lambda}:\lambda\in\mathrm{spec}(\bbE_{\idtyty})\setminus\{1\}) <1,
\end{equation}
then the ground state $\omega$ is translation invariant with exponential decay of correlations
\begin{equation*}
\abs{\omega(A\str B)-\omega(A\str)\omega(B)}\leq C \lambda_e^l,
\end{equation*}
where $l=\mathrm{dist}(\mathrm{supp}(A),\mathrm{supp}(B))$. Moreover, $1$ is also a simple eigenvalue of the transpose map $\bbE_\idtyty^t$, and the corresponding eigenvector can
be chosen to be a state $\rho$. Explicitly, $\rho$ is the unique density matrix such that
\begin{equation}\label{E1}
\Tr \left(\rho \bbE_\idtyty(b)\right) = \Tr (\rho b),
\end{equation}
for all $b\in\caB$. By continuity and~(\ref{intertwine}), for $g$ in an open neigborhood of $e$, 1 is also a simple eigenvalue of $\bbE_{U_g}$, 
and a gap condition similar to~(\ref{FCS_Spectrum}) will hold for $\bbE_{U_g}$ and $\bbE_{U_g}^t$. Using~(\ref{intertwine}) again and the unitarity of the representation $U_g$, we also have that for any $b\in\caB$
\begin{equation*}
\Tr\left(\rho u_g\str \bbE_{U_g}(b)\right) 
= \Tr\left(\rho V\str (U_g\str\otimes u_g\str)(U_g\otimes b) V \right) 
= \Tr\left(\rho \bbE_{\idtyty}(u_g\str b)\right)
= \Tr\left(\rho u_g\str b\right)
\end{equation*}
where the last equality follows from~(\ref{E1}). Hence, $\rho u_g^*$ is the eigenvector of $\bbE_{U_g}^t$ with eigenvalue 1.

Without loss of clarity but slightly abusing notation, we can assume that there is an element $S=S^*$ of the representation of the Lie algebra such that $U_g=\exp(igS)$, for $g\in\bbR$. We now turn our attention to the (formal) boundary operators $\sum_{x = 1}^{\infty} S^x$ and $\sum_{x = -\infty}^{0} S^x$. It is convenient to define them as limits as $L\to\infty$ of the following local approximations
\begin{equation}\label{SpmL}
S^+(L) = \sum_{x = 1}^{L^2} f_L(x-1) S^x,\qquad S^-(L) = \sum_{x = -L^2+1}^0 f_L(-x) S^x,
\end{equation}
where $f_L:\bbZ^+\to \bbR$ is given by
\begin{equation}\label{f_L}
f_L(mL+n)=1- m/L, \mbox{ for } m,n\in [0,L-1], \mbox{ and } f(x)=0, \mbox{ for } x\geq L^2.
\end{equation}
Correspondingly, we shall write
\begin{equation} \label{Ug+f}
U_g^+(L)= \exp ( igS^+(L)).
\end{equation}
\begin{thm}\label{thm:ExcessSpin_FCS}
Let $\omega$ be an $G$-invariant finitely correlated state generated by the intertwiner~$V$ as above, and let $(\caH_\omega,\pi_\omega,\Omega_\omega)$ be its GNS representation. Assume that~(\ref{FCS_Spectrum}) holds. Then the strong limits
\begin{equation*}
U^+_g = \slim_{L\to\infty} \ep{\iu g \cdot \pi_\omega\left(S^+(L)\right)},\qquad 
U^-_g = \slim_{L\to\infty} \ep{\iu g \cdot \pi_\omega\left(S^-(L)\right)}
\end{equation*}
exist on $\caH_\omega$ for all $g\in G$ and define infinite-dimensional, strongly continuous representations of $G$. Moreover,
\begin{equation*}
U^+_g\in\pi_\omega\left(\caA^\Lchain\right)',\qquad U^-_g \in\pi_\omega\left(\caA^\Rchain\right)',
\end{equation*}
where $\caU'$ denotes the commutant of the algebra $\caU$.
\end{thm}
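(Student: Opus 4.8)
The plan is to establish the strong convergence first, and then deduce the representation and commutation properties from it. I would work in the GNS Hilbert space $\caH_\omega$, which for a purely generated finitely correlated state has the explicit description in terms of the auxiliary algebra $\caB=\caM_k$: vectors of the form $\pi_\omega(A)\Omega_\omega$ with $A$ a local observable are dense, and inner products $\braket{\pi_\omega(A)\Omega_\omega}{\pi_\omega(B)\Omega_\omega} = \omega(A\str B)$ are computed by the transfer operator formula $\omega(A^1\cdots A^n) = \Tr(\rho\, \bbE_{A^1}\circ\cdots\circ\bbE_{A^n}(\idtyty))$. The key point is that $S^+(L)$ is supported on sites $1,\dots,L^2$, so $U^+_g(L) = \exp(\iu g S^+(L))$ is a \emph{local} unitary, and $\pi_\omega(U^+_g(L))$ acts on the dense set of local vectors by a computable rule.

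First I would prove the Cauchy property of the sequence $\pi_\omega(U^+_g(L))\psi$ for $\psi = \pi_\omega(A)\Omega_\omega$ with $A$ local. For $L$ large enough that $\supp(A)\subset[1,L^2]$, the vector $\pi_\omega(U^+_g(L)A U^+_g(L)\str)\cdot\pi_\omega(U^+_g(L))\Omega_\omega$ can be analyzed: the piece $U^+_g(L)AU^+_g(L)\str$ converges (since $S^+(L)$ restricted near $\supp(A)$ stabilizes), and the real work is to show that $\pi_\omega(U^+_g(L))\Omega_\omega$ is Cauchy. Here I would compute $\norm{\pi_\omega(U^+_g(L))\Omega_\omega - \pi_\omega(U^+_g(L'))\Omega_\omega}^2 = 2 - 2\Re\,\omega\!\left(U^+_g(L)\str U^+_g(L')\right)$ and use the transfer-operator representation together with the eigenvalue relations~\eqref{intertwine} and~\eqref{E1}. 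The function $f_L$ in~\eqref{f_L} is engineered precisely so that consecutive differences $f_L(x)-f_L(x+1)$ are $O(1/L)$, so that on each block of length $L$ the operator $U^+_g(L)$ looks approximately like $\prod U_{g f_L}^x$ with a slowly varying phase; the telescoping over the $L$ blocks, combined with the spectral gap~\eqref{FCS_Spectrum} for $\bbE_{\idtyty}$ (and its perturbation $\bbE_{U_{g'}}$ for small $g'$, noted after~\eqref{E1}), should make $\omega(U^+_g(L)\str U^+_g(L'))\to 1$. Roughly: each block contributes a factor controlled by an eigenvalue of $\bbE_{U_{g f_L}}$ near $1$, with correction $O(\lambda_e^L)$ from the gap and $O(1/L)$ from the variation of $f_L$, and there are $\leq L$ blocks, so the total error is $O(1/L) + O(L\lambda_e^{L})\to 0$. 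This gives a uniform Cauchy estimate on a dense set; since all $\pi_\omega(U^+_g(L))$ are unitary (norm $1$), strong convergence on $\caH_\omega$ follows by the standard $3\epsilon$ argument. The limit $U^+_g$ is then an isometry, and applying the same to $g^{-1}$ shows it is unitary.

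Having the strong limit, the representation property $U^+_g U^+_h = U^+_{gh}$ follows from the corresponding relation for the Lie algebra generator: since $U_g=\exp(\iu g S)$ on each site and $G$ is a connected subgroup of $SU(d)$ (as stated at the start of Section~\ref{sec:Excess}), it suffices to treat the one-parameter subgroups $g\in\bbR$, where $S^+(L)$ being a fixed self-adjoint operator gives $\exp(\iu g S^+(L))\exp(\iu h S^+(L)) = \exp(\iu(g+h)S^+(L))$ exactly, and strong continuity passes to the limit; the full group statement then follows because connected Lie groups are generated by one-parameter subgroups and the limits are consistent. Strong continuity in $g$ likewise descends from the uniform (in $L$) estimate, using that $\norm{\pi_\omega(U^+_g(L)-U^+_h(L))\Omega_\omega}$ can be bounded via the same transfer-operator computation. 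For the commutation statement: if $B\in\caA^Y$ with $Y\subset\Lchain$ finite, then for all large $L$ the supports of $S^+(L)$ (contained in $[1,L^2]\subset\Rchain$) and $B$ are disjoint, so $[U^+_g(L),B]=0$ in $\caA^\Gamma$, hence $[\pi_\omega(U^+_g(L)),\pi_\omega(B)]=0$; letting $L\to\infty$ in the strong topology gives $[U^+_g,\pi_\omega(B)]=0$, and since such $B$ are norm-dense in $\caA^\Lchain$ we get $U^+_g\in\pi_\omega(\caA^\Lchain)'$. That $U^+_g$ is infinite-dimensional (i.e.\ the representation is genuinely on the infinite-dimensional $\caH_\omega$ and not, say, trivial) will come out of the construction together with the nontriviality of $S$ on the boundary; I would argue it is not a multiple of the identity by exhibiting a local observable on which it acts nontrivially, which again reduces to a transfer-operator computation.

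The main obstacle is the core convergence estimate: showing $\omega(U^+_g(L)\str U^+_{g}(L'))\to 1$ with explicit control. The difficulty is that $U^+_g(L)\str U^+_g(L')$ is not itself a simple product of on-site unitaries — it is $\exp(-\iu g S^+(L))\exp(\iu g S^+(L'))$, and because $S^+(L)$ and $S^+(L')$ commute (both are sums of commuting on-site $S^x$'s, each multiplied by a scalar), this equals $\exp(\iu g\sum_x (f_{L'}(x-1)-f_L(x-1))S^x)$, which \emph{is} a product of on-site unitaries $\prod_x U^x_{g(f_{L'}-f_L)(x-1)}$. So the commutativity saves us: one is left estimating $\Tr(\rho\,\bigcirc_x \bbE_{U_{g\delta_L(x)}}(\idtyty))$ where $\delta_L = f_{L'}-f_L$ is small and slowly varying, and this is a genuine perturbation-of-spectral-gap problem for a product of transfer operators each close to $\bbE_{\idtyty}$. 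I expect the cleanest route is to group the product into the $L$ (or $L'$) blocks on which $\delta_L$ is constant, apply the gap estimate block by block to replace each block's transfer operator by its leading eigenprojection up to $O(\lambda_e^{L})$, and then track the product of the leading eigenvalues, which are $1 + O(g\,\delta_L)$ by analytic perturbation theory around the eigenvalue $1$ of $\bbE_{\idtyty}$. Getting the bookkeeping of these two small parameters ($1/L$ from $\delta_L$, $\lambda_e^L$ from the gap) to combine into a bound that vanishes as $L,L'\to\infty$ — uniformly enough to upgrade to strong convergence on all of $\caH_\omega$ — is the technical heart of the argument.
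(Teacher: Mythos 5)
Your overall architecture is the paper's: write the expectation of a product of on-site unitaries as an ordered product of transfer operators $\bbE_{U_{g f_L(x)}}$, exploit that $f_L$ is constant on blocks of length $L$, replace each block's $L$-fold power by its rank-one leading eigenprojection up to $O(\lambda^L)$ (total $O(L\lambda^L)$), track the product of the resulting scalar overlaps, and then upgrade weak to strong convergence using unitarity; the commutant statement by locality is exactly as in the paper. However, the technical heart — which you correctly identify as the main obstacle — contains a genuine gap in the error bookkeeping. You assert that each block contributes a factor ``$1+O(g\,\delta_L)$ by analytic perturbation theory,'' i.e.\ a correction of order $1/L$, and that over $\leq L$ blocks the total error is $O(1/L)$. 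But $(1+O(1/L))^{L}$ is $O(1)$, not $1+o(1)$: with only a first-order-per-block estimate the product of overlaps need not converge to $1$ at all. The paper's proof closes this by an algebraic input you never invoke: since $\det u_g=1$ and $\rho$ commutes with $u_g$, Schur's lemma gives $\Tr(\rho S)=0$, hence the successive-eigenvector overlaps satisfy $\Tr(\rho\, u^*_{g/L})-1=-(g/L)^2\Tr(\rho S^2)+O((g/L)^3)$ — the defect is \emph{quadratic} in the increment, so $L$ blocks give $(1+O(1/L^2))^{L-1}=1+O(1/L)$. Without $\Tr(\rho S)=0$ the construction with this particular $f_L$ would fail.

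Two further corrections. First, your reduction to $\omega(U_g^+(L)^{*}U_g^+(L'))$ via commutativity of the $S^x$ is fine, but the profile $\delta=f_{L'}-f_L$ is \emph{not} small (e.g.\ for $L'=2L$ it reaches $\approx 3/4$ near $x=L^2$); it is only slowly varying with endpoints at $0$. So this is not a ``perturbation of $\bbE_{\idtyty}$'' problem: you need that $1$ is an \emph{exact} eigenvalue of $\bbE_{U_{g'}}$ with eigenvector $u_{g'}$ for all $g'$ in a fixed neighborhood of $e$ — this comes from the intertwining relation~(\ref{intertwine}), not from perturbing the eigenvalue of $\bbE_\idtyty$ — together with a uniform spectral gap on that neighborhood (the paper first restricts to $g\in(-s,s)$ and only afterwards extends to all of $G$ by the group property, as you sketch). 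Second, the paper does not prove the strong Cauchy property of $\pi_\omega(U_g^+(L))\psi$ directly; it proves convergence of all matrix elements $\omega(A^*U_g^+(L)B)$ to the explicit limit $\Tr\bigl(\rho\,\bbE^{(2l^2)}_{A^*U_g^+(l)B}(u_g)\bigr)$ and then invokes a general lemma (weak convergence of unitaries to a unitary implies strong convergence). Your route through $\|(\pi_\omega(U_L)-\pi_\omega(U_{L'}))\psi\|^2$ is viable but requires the same transfer-operator limit formula with insertions of $A$, and that explicit formula is also what the paper later uses for strong continuity at $g=0$ and for Theorem~\ref{thm:FF}; deriving it is not optional.
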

Before commencing the proof of the theorem, we make a technical observation in the form of the following lemma.
\begin{lemma}\label{lma:UnitaryCauchy}
Let $(U_n(\cdot))_{n\in\bbN}$ be a sequence of unitary operators on $\caH$ such that
\begin{equation}\label{Cauchy_Gen_1}
\lim_{\min(n,m)\to\infty}\braket{\phi}{\left(1-U_n\str U_m\right)\psi} = 0,\qquad \lim_{\min(n,m)\to\infty}\braket{\phi}{\left(1-U_n U_m\str\right)\psi} = 0,
\end{equation}
for all $\phi,\psi$ in a dense subset $\caD\subset\caH$. There exists a unitary operator $U$ such that $\slim_{n\to\infty} U_n = U$.
\end{lemma}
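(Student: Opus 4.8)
The plan is to show that the sequence $(U_n)$ is strongly Cauchy by exploiting the polarization of the hypotheses and uniform boundedness. First I would fix $\psi\in\caD$ and estimate $\Vert U_n\psi - U_m\psi\Vert^2$ directly. Expanding, one has
\begin{equation*}
\Vert U_n\psi - U_m\psi\Vert^2 = 2\Vert\psi\Vert^2 - \braket{U_n\psi}{U_m\psi} - \braket{U_m\psi}{U_n\psi} = \braket{\psi}{(1-U_n\str U_m)\psi} + \overline{\braket{\psi}{(1-U_n\str U_m)\psi}}.
\end{equation*}
Here I used $\braket{U_n\psi}{U_m\psi} = \braket{\psi}{U_n\str U_m\psi}$ together with the fact that $U_n$ is unitary so $\Vert U_n\psi\Vert = \Vert\psi\Vert$. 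By the first hypothesis in~(\ref{Cauchy_Gen_1}) applied with $\phi=\psi$, the right-hand side tends to $0$ as $\min(n,m)\to\infty$. Hence $(U_n\psi)$ is Cauchy in $\caH$ for every $\psi$ in the dense set $\caD$.

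Next I would promote this to all of $\caH$. Since each $U_n$ is a contraction (in fact an isometry) with $\Vert U_n\Vert = 1$, a standard $\varepsilon/3$ argument extends the Cauchy property: given arbitrary $\psi\in\caH$ and $\varepsilon>0$, pick $\psi'\in\caD$ with $\Vert\psi-\psi'\Vert<\varepsilon/3$, then $\Vert U_n\psi - U_m\psi\Vert \leq \Vert U_n(\psi-\psi')\Vert + \Vert U_n\psi' - U_m\psi'\Vert + \Vert U_m(\psi'-\psi)\Vert \leq 2\varepsilon/3 + \Vert U_n\psi' - U_m\psi'\Vert$, and the middle term is eventually below $\varepsilon/3$. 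Thus $(U_n\psi)$ converges for every $\psi\in\caH$; call the limit $U\psi$. Linearity of $U$ is immediate, and $\Vert U\psi\Vert = \lim\Vert U_n\psi\Vert = \Vert\psi\Vert$, so $U$ is an isometry and in particular bounded, and $\slim_{n\to\infty} U_n = U$ by construction.

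It remains to show $U$ is unitary, i.e.\ surjective; this is where the second hypothesis in~(\ref{Cauchy_Gen_1}) enters, and I expect it to be the one point requiring a little care rather than a genuine obstacle. The isometry property only gives that $\Ran U$ is closed; to get $\Ran U = \caH$ I would show $U\str$ is also an isometry, equivalently that $UU\str = 1$. One route: the adjoints $U_n\str$ are themselves unitaries satisfying the mirror image of~(\ref{Cauchy_Gen_1}) — indeed the second relation in~(\ref{Cauchy_Gen_1}) is exactly the Cauchy condition for $(U_n\str)$ after replacing $\phi,\psi$ appropriately and using $\braket{\phi}{(1-U_nU_m\str)\psi} = \overline{\braket{\psi}{(1 - U_mU_n\str)\phi}}$. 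So by the same argument $(U_n\str)$ converges strongly to some isometry $W$. Then for $\phi,\psi\in\caH$,
\begin{equation*}
\braket{W\phi}{\psi} = \lim_n \braket{U_n\str\phi}{\psi} = \lim_n \braket{\phi}{U_n\psi} = \braket{\phi}{U\psi},
\end{equation*}
so $W = U\str$. Since $W$ is an isometry, $U\str$ is an isometry, hence $UU\str = 1$; combined with $U\str U = 1$ from $U$ being an isometry, $U$ is unitary. This completes the proof.

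The only subtlety worth flagging is the interchange of the polarization identity with the limit when reducing the second hypothesis to a Cauchy statement for $(U_n\str)$: one should verify that $\braket{\phi}{(1-U_nU_m\str)\psi}\to 0$ for all $\phi,\psi\in\caD$ indeed yields $\Vert U_n\str\psi - U_m\str\psi\Vert^2 = \braket{\psi}{(1-U_mU_n\str)\psi} + \text{c.c.}\to 0$, which is immediate after swapping the roles of $n$ and $m$ (the limit is symmetric in $n,m$ since it is taken over $\min(n,m)\to\infty$). Everything else is routine.
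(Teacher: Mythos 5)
Your proof is correct, and it takes a cleaner route than the paper's. Both arguments pivot on the same algebraic identity $\Vert U_n\psi - U_m\psi\Vert^2 = 2\,\Re\,\braket{\psi}{(1-U_n^*U_m)\psi}$, but you use it to obtain \emph{strong} Cauchyness on $\caD$ directly, extend by $\varepsilon/3$, and then handle $U^*$ symmetrically to get unitarity of the limit. The paper instead feeds this identity into Cauchy--Schwarz to get weak Cauchyness for $\phi\in\caH$, $\psi\in\caD$, constructs $U$ as the bounded operator associated to the limiting sesquilinear form, proves $U$ is unitary, and then invokes the fact that weak and strong convergence of unitaries to a unitary coincide. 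Your version avoids the detour through weak operator convergence and the ``weak implies strong for unitaries'' step entirely; the paper's version is somewhat more compressed notationally but relies on that extra equivalence. One small bonus of your route: it makes explicit that the second condition in~(\ref{Cauchy_Gen_1}) is what guarantees surjectivity (i.e.\ $UU^*=1$), a point the paper's one-line ``similarly'' glosses over.
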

\begin{proof}
By Cauchy-Schwarz,
\begin{equation*}
\abs{\braket{\phi}{(U_n - U_m)\psi}}^2 
\leq \Vert\phi\Vert^2 \braket{\psi}{(2-U_n\str U_m-U_m\str U_n)\psi},
\end{equation*}
so that (\ref{Cauchy_Gen_1}) implies the existence of a bounded sesquilinear form $Q$ such that $\braket{\phi}{U_n\psi}\to Q(\phi,\psi)$. This in turn
implies the existence of unique bounded linear operator $U$ on $\caH$ such that $\braket{\phi}{U\psi}= Q(\phi,\psi)$, and $U_n\to U$ in the weak 
operator topology. Similarly, $U_n\str$ converges to $U\str$. Unitarity of the limit follows from~(\ref{Cauchy_Gen_1}) again as
\begin{equation*}
\braket{\phi}{U\str U\psi} = \lim_{n\to\infty}\lim_{m\to\infty}\braket{U_n\phi}{U_m\psi} = \braket{\phi}{\psi}.
\end{equation*}
It remains to note that strong and weak convergences are equivalent for unitary operators.
\end{proof}
\begin{proof}[Proof of Theorem~\ref{thm:ExcessSpin_FCS}]
In this proof, we consider group elements with $g\in(-s,s)\subset\bbR$ for some $s>0$, or equivalently arbitrary group elements belonging 
to an open neighborhood of the neutral element of $G$, such that $\bbE_{U_g}$ has the properties mentioned following~(\ref{FCS_Spectrum}).

Because of the spectral condition, there exists a rank 1 operator $P_g$ on $\caB$, with $\Vert P_g\Vert =1$,
and constants $C>0$ and $\lambda\in (0,1)$, such that
\begin{equation}\label{ConUg}
\Vert \bbE_{U_g}^{(n)}-P_g \Vert \leq C\lambda^n,
\end{equation}
for all $g\in (-s,s)$ and $n\geq 0$. By the previous discussion we have for any $b\in\caB$
\begin{equation*}
P_g (b) = \Tr (\rho u^*_g b) u_g.
\end{equation*}
We further define the rank 1 map $Q_g$ on $\caB$ by
\begin{equation*}
Q_g (b) = \Tr(\rho b) u_g.
\end{equation*}
First, we prove the following estimate:
\begin{equation*}
\left\Vert \bbE^{(L^2)}_{U_{g}^+(L)}-Q_g\right\Vert \leq C_1/L,
\end{equation*}
where $U_{g}^+(L)$ is defined in~(\ref{Ug+f}) and $C_1$ is a constant independent of $g$ and $L$.

We shall use that
\begin{equation*}
\bbE^{(L^2)}_{U_{g}^+(L)} = \prod_{x=0}^{L^2-1} \bbE_{U_{gf_L(x)}}
\end{equation*}
where the product is ordered left-to-right, and $f_L$ was introduced in~(\ref{f_L}). Note that $f_L$ is constant on blocks of length $L$. The first step is to use a telescopic sum of products and~(\ref{ConUg}) to obtain the estimate
\begin{equation*}
\bigg\Vert\prod_{x=0}^{L^2-1} \bbE_{U_{gf_L(x)}}-\prod_{n=L}^{1}P_{ng/L}\bigg\Vert \leq  L C\lambda^L.
\end{equation*}
Next, we analyze the product of rank 1 operators as follows:
\begin{equation}\label{rank1product}
\left(\prod_{n=L}^{1}P_{ng/L}\right)(b)
= \left( \prod_{n=L}^2 \Tr \left(\rho u^*_{ng/L} u_{(n-1)g/L}\right)\right)\Tr \left(\rho u^*_{g/L} b\right) u_g.
\end{equation}
Since $\det u_g =1$, and $\rho$ commutes with $u_g$, we have $\Tr \rho S=0$ by Schur's lemma. Using this and the group property for the 
representation $u_g$, we find that $\Tr \rho u^*_g$ is quadratic in $g$ for small $g$. Explicitly, 
\begin{equation*}
\Tr (\rho u^*_{g/L}) -1 = - (g/L)^2 \Tr(\rho S^2) + \caO((g/L)^3).
\end{equation*}
This estimate implies that there exists a constant $C_2$ such that
\begin{equation*}
\left\vert\left(\prod_{n=L}^2 \Tr \rho u^*_{ng/L} u_{(n-1)g/L}\right)-1\right\vert\leq \left( 1+ C_2 (g/L)^2\right)^{L-1} -1\leq \ep{C_2\frac{g^2}{L-1}}-1.
\end{equation*}
Similarly for the last factor in~(\ref{rank1product}), we have for another constant $C_3$
\begin{equation*}
\abs{\Tr (\rho u\str_{g/L} b) -\Tr (\rho b)} \leq C_3\Vert b\Vert \abs{g}/L,
\end{equation*}
and therefore $\abs{\Tr (\rho u\str_{g/L} b) u_g  - Q_g(b)}\leq C_3 \Vert b\Vert \abs{g}/L$.
Putting these estimates together, we obtain 
\begin{align*}
\left\Vert \bbE^{(L^2)}_{U_{g}^+(L)} -Q_g\right\Vert 
&\leq \bigg\Vert\prod_{x=0}^{L^2-1} \bbE_{U_{gf_L(x)}}-\prod_{n=L}^{1}P_{ng/L}\bigg\Vert
+ \bigg\Vert \prod_{n=L}^{1}P_{ng/L} - Q_g\bigg\Vert\\
&\leq  CL\lambda^L + \left(\ep{C_2\frac{g^2}{L-1}}-1\right) + C_3 \abs{g}/L \leq C_1/L.
\end{align*}
for a suitable constant $C_1$.

Now, we consider arbitrary local observables $A,B\in\caA^{[-l^2+1,l^2]}$ and will show 
that
\begin{equation}\label{FCS_U_Limit}
\lim_{L\to\infty}\omega(A\str U_g^+(L)B) = \Tr\left(\rho \bbE^{(2l^2)}_{A\str U_g^+(l)B}(u_g)\right).
\end{equation}
Indeed, for $L>l$,
\begin{align*} 
\omega(A\str U_g^+(L)B) &= \Tr\bigg(\rho \bbE^{(2l^2)}_{A\str U_g^+(l)B}\circ \prod_{x=l^2}^{L^2-1} \bbE_{U_{gf_L(x)}}(\idtyty)\bigg) \\
&= \Tr\bigg(\rho \bbE^{(2l^2)}_{A\str U_g^+(l)B}(u_g)\bigg) + \Tr\bigg(\rho \bbE^{(2l^2)}_{A\str U_g^+(l)B}\circ \bigg[\prod_{x=l^2}^{L^2-1} \bbE_{U_{gf_L(x)}} - Q_g \bigg](\idtyty)\bigg), 
\end{align*}
and the last term is a remainder that is bounded above by $C_1 \Vert A \Vert \Vert B \Vert / (L-l)$. Therefore,
\begin{align*}
\abs{\omega\left(A\str (U_g^+(L)-U_g^+(L'))B\right)} 
&\leq \abs{\omega(A\str U_g^+(L)B) - \Tr\left(\rho \bbE^{(2l^2)}_{A\str U_g^+(l)B}(u_g)\right)} \\
&\quad + \abs{\omega(A\str U_g^+(L')B) - \Tr\left(\rho \bbE^{(2l^2)}_{A\str U_g^+(l)B}(u_g)\right)} \\
&\leq \frac{2C_1 \Vert A \Vert \Vert B \Vert}{\mathrm{min}((L-l),(L'-l))}
\end{align*}
and we obtain the existence of a limiting operator $U_g^+$ by Lemma~\ref{lma:UnitaryCauchy}, with the dense set being
\begin{equation*}
\caH_{loc} := \{\pi_\omega(A)\Omega_\omega:A\in\caA_{loc}\}.
\end{equation*}

For any $g,g'\in (-s,s)$ such that $g+g'\in (-s,s)$, the group property follows from that of the local approximations. Moreover, for any $\psi\in\caH_{loc}$,
\begin{equation*}
\Vert (U_g^+-1)\psi \Vert^2 = \lim_{L\to\infty} \omega\left(A\str (2 - U_g^+(L)\str-U_g^+(L))A\right) =  2 \omega(A\str A) - 2\mathrm{Re\,} \Tr\left(\rho \bbE^{2l^2}_{A\str U_g^+(l) A}(u_g)\right)
\end{equation*}
by~(\ref{FCS_U_Limit}). Letting $g\to 0$, the trace converges to $\Tr\left(\rho \bbE_{A\str A}(\idtyty)\right) = \omega(A\str A)$. Hence, $U_g^+$ is strongly continuous at $g=0$ and by the group property for any $g\in (-s,s)$.

Finally, the last claim of the theorem follows from $\pi_\omega(U_g^+(L))\in\pi_\omega(\caA^{(-\infty,0]})'$, the strong convergence just proved, and the general fact that $\caU' = \caU'''$.
\end{proof}

Since we obtain strong continuity by a perturbative argument and without controlling the convergence of the generators $S^+(L)$, no further 
information on the limiting generator - for example its domain - can be extracted from the proof.

At the heart of the proof is an argument that amounts to a discrete adiabatic theorem for a special class of operators, namely the operators $\bbE_{U_g}$. For unitaries, a discrete
adiabatic theorem has been known for some time \cite{Dranov:1998},  but the $\bbE_{U_g}$ are not unitary and not expected to be normal operators in general. Nevertheless, as
can be seen in the proof, they turn out to posses  all the necessary properties for an adiabatic property to hold.

\subsection{Excess spin in the stochastic-geometric picture}\label{sub:Poisson}

In this section, we use the notations and definitions of~\cite{Nachtergaele:1993wk} and refer to it for a complete exposition, see also~\cite{Ueltschi:2013uy} for recent developments and conjectures. The gauge group is $SU(2)$, and$S^x,S^y,S^z$ are the usual spin matrices in the $d$-dimensional representation of the Lie algebra, where $d = 2S + 1$. Concretely, the Hamiltonians under consideration have the form
\begin{equation}\label{AF_H}
H^\Lambda = -\sum_{\braket{x}{y}\in\Lambda}\sum_{k=0}^{2S} J_k Q_k(S^x\cdot S^y)
\end{equation}
where $\braket{x}{y}\in\Lambda$ denotes the set of pairs of nearest-neighbors in $\Lambda$, $J_k\geq0$ for $0\leq k\leq 2S$, and $Q_k$ is a polynomial of degree $k$ given by
\begin{equation*}
Q_k(z) = 2^k \left[\frac{(2S-k)!}{(2S)!}\right]^2\prod_{l=2S-k+1}^{2S}(z_l-z), \qquad z_l = \frac{l(l+1)}{2}-S(S+1).
\end{equation*}
Note that the a priori intricate form of the polynomials arises from the description of the spin $S$ states as the symmetric subspace of the tensor product of $2S$ spins $1/2$, which interact through a purely antiferromagnetic interaction of the form $(1/4) - (1/2)P^{xy}_{(0)}$, where $P^{xy}_{(0)}$ denotes the projection onto the spin-$0$ subspace of the two spin-$1/2$. In particular, for $S=1$, we have that $Q_1(S^x\cdot S^y) = 1/2(1-S^x\cdot S^y)$. We shall refer to models where $J_{2S}>0$ and all other coupling constants are $0$ as single line models, all others will be called multiline models.

\begin{figure}
\centering
\def\svgwidth{0.5\textwidth}
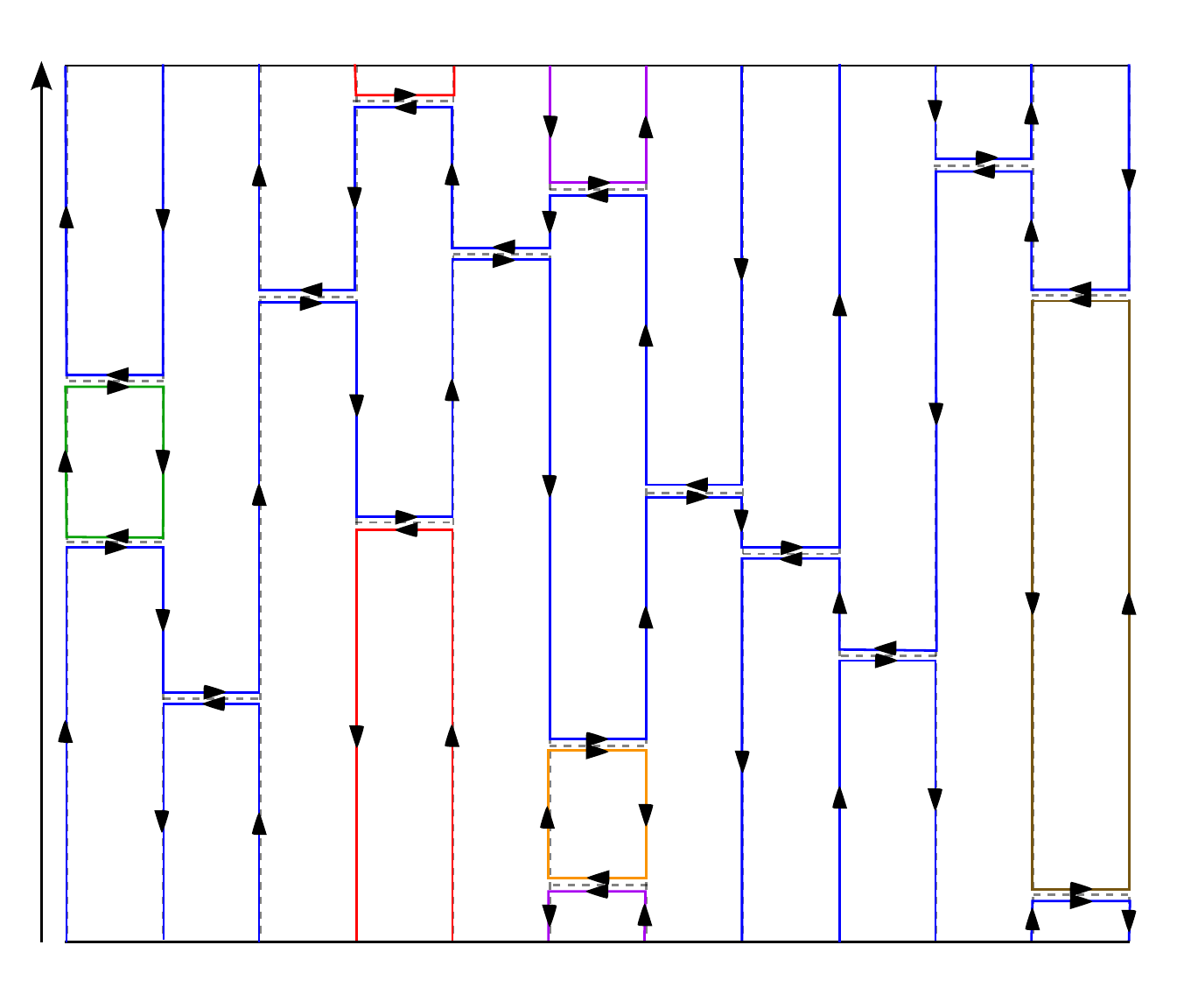
\caption{One particular loop configuration $\nu$ for the single line model on the finite chain $[a,b]$. The distribution of bridges in the space-time $[a,b]\times[0,\beta]$ is given by the Poisson process $\rho_\beta$. At each bridge $(t_i,e_i)$, the projection $P^{e_i}_{(0)}$ imposes the antisymmetry of the spin configuration across $e_i$ both at $t_i^-$ and $t_i^+$. Through $t_i$ and with equal probability, the spin configuration can either remain constant (arrows pointing in the same direction) or the spins at both ends of $e_i$ must flip (arrows pointing in opposite directions).}
\label{fig:loops}
\end{figure}
Crucially, physical quantities such as the spin-spin correlation function have a natural interpretation in the stochastic picture which allows for relatively straightforward computations. For single line models and $J_{2S} = 1$, equilibrium expectation values of observables that are functions of $S^x_3$ can be expressed as
\begin{equation}\label{StochRep}
\omega_\beta\left(f(S^x_3)_{x\in\Gamma}\right) = \int d\mu_\beta(\nu) E_\nu(f),
\end{equation}
where $d\mu_\beta(\nu) = Z_\beta^{-1}d\rho_\beta(\nu) (2S+1)^{l(\nu)}$ and $l(\nu)$ is the number of loops in $\nu$, $d\rho_\beta$ is a Poisson process running at rate $1$ between $t=0$ and $t=\beta$ that determines the position of bridges, see Figure~\ref{fig:loops}, and $E_\nu$ is a linear functional on the algebra of observables which reads
\begin{equation}\label{E(f)}
E_\nu(f) = \frac{1}{(2S+1)^{l(\nu)}} \sum_{\text{Spin configurations }\sigma\text{ consistent with }\nu} f(\sigma(t=0))
\end{equation}
Similar formulas can be derived for general multiline models and for arbitrary observables. Ground state expectation values are obtained in the limit $\beta\to\infty$. In particular,
\begin{equation}\label{EofSpinSpin}
E_\nu(S^{x}_3S^{y}_3) \propto (-1)^{\abs{x-y}}I_\nu(x \sim y)
\end{equation}
where $I_\nu(x \sim y)$ is the indicator function of the event that $(x,0)$ and $(y,0)$ belong to the same loop in the configuration $\nu$, so that
\begin{equation}\label{SpinSpin}
\omega_\beta(S^{x}_3S^{y}_3) \propto (-1)^{\abs{x-y}}\bbP_{\mu_\beta}(x \sim y),
\end{equation}
namely the probability that $(x,0)$ and $(y,0)$ belong to the same loop.

For these models, it is more natural to define the approximants $\hat{S}^+(\epsilon)\in\caA_\Rchain$ and $\hat{S}^-(\epsilon)\in\caA_\Lchain$, for $\epsilon>0$, by
\begin{equation*}
\hat{S}^+(\epsilon) = \sum_{x\geq 1} \ep{-\epsilon x}S^x, \qquad \hat{S}^-(\epsilon) = \sum_{x\leq0} \ep{\epsilon x}S^x
\end{equation*}
and give meaning to the limit as $\epsilon\to 0$. We will treat the case $\hat{S}^+(\epsilon)$ in details, the other one being completely analogous. In the stochastic representation~(\ref{StochRep}, \ref{E(f)}), we observe that the loops that intersect only the positive axis contribute a total spin of zero (because they are closed), so that only the loops encircling $(1/2,0)$ may contribute to the excess spin in the GNS representation of the ground state. This intuition is the object of the following lemma.

\begin{lemma}\label{lma:StochExcessSpin}
For any given configuration $\nu$ of loops $\gamma$, let
\begin{equation}\label{StochExcess}
\hat{S}_\nu := \sum_{(x,k): x\geq1,N_{(1/2,0)}(\gamma(x,k))=1}S^{(x,k)},
\end{equation}
where $S^{(x,k)}$, $k=1\ldots2S$, are the $2S$ spins $1/2$ that make up $S^x$, $\gamma(x,k)$ denotes the loop to which $((x,k),0)$ belongs, and $N_{(1/2,0)}(\gamma)$ is the indicator function of $(1/2,0)\in\mathrm{int}(\gamma)$. If
\begin{equation}\label{thirdmoment}
\sum_{x\in\bbZ}\abs{x^3\,\omega(S^0S^x)}<\infty,
\end{equation}
then
\begin{equation}\label{ConvStochExcess}
\lim_{\epsilon\to0} \int d\mu_\infty(\nu) E_\nu\left[(\hat{S}^+(\epsilon)-\hat{S}^+_\nu)^2\right] = 0.
\end{equation}
\end{lemma}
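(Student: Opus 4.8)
The plan is to compute the second moment on the left-hand side of~(\ref{ConvStochExcess}) by expanding the square, and to show that the difference $\hat{S}^+(\epsilon)-\hat{S}^+_\nu$ is small in $L^2(\mu_\infty)$ because, as the remark preceding the lemma indicates, the only spins-$1/2$ contributing to $\hat{S}^+_\nu$ are those whose loop encircles the boundary point $(1/2,0)$, while $\hat{S}^+(\epsilon)$ weights \emph{all} spins $S^x$, $x\geq1$, by $\ep{-\epsilon x}$. Write $\hat{S}^+(\epsilon) = \sum_{x\geq1}\sum_{k=1}^{2S} \ep{-\epsilon x} S^{(x,k)}$ so that $\hat{S}^+(\epsilon)-\hat{S}^+_\nu = \sum_{(x,k)} c^\nu_{x,k}(\epsilon)\, S^{(x,k)}$ with $c^\nu_{x,k}(\epsilon) = \ep{-\epsilon x}$ if $N_{(1/2,0)}(\gamma(x,k))=0$ and $c^\nu_{x,k}(\epsilon) = \ep{-\epsilon x}-1$ if $N_{(1/2,0)}(\gamma(x,k))=1$. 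In either case $|c^\nu_{x,k}(\epsilon)|\leq 1$, and $c^\nu_{x,k}(\epsilon)\to 0$ pointwise as $\epsilon\to0$ on the bounded-loop configurations (for the encircling ones trivially, for the non-encircling ones the exponential is killed), so a dominated-convergence argument will do the job provided the second moment is controlled uniformly.

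The key computational step is to evaluate $E_\nu\!\left[(\hat S^+(\epsilon)-\hat S^+_\nu)^2\right]$ using the pairing structure encoded by $E_\nu$ on products of two spins-$1/2$: by the loop representation, $E_\nu(S^{(x,k)}_3 S^{(x',k')}_3)$ is proportional to $(-1)^{|x-x'|}$ times the indicator that $((x,k),0)$ and $((x',k'),0)$ lie on the same loop, analogously to~(\ref{EofSpinSpin}), and one then integrates over $\nu$ against $d\mu_\infty$. First I would observe that on a fixed configuration the cross terms organize themselves according to loops: the total contribution of any loop $\gamma$ that does \emph{not} encircle $(1/2,0)$ cancels in $\hat S^+_\nu$ (it contributes zero net spin because it is closed), and the same algebraic cancellation, carried by the alternating sign $(-1)^{|x-x'|}$, makes the within-loop sum of coefficients appear squared. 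Concretely, after integrating, the second moment is bounded by a sum over pairs $(x,x')$ of $|c(\epsilon)|$-type weights times $\bbP_{\mu_\infty}\big((x,0)\text{ and }(x',0)\text{ on the same loop}\big)$, which by~(\ref{SpinSpin}) is comparable to $|\omega(S^0 S^{x-x'})|$ up to constants. The third-moment hypothesis~(\ref{thirdmoment}), $\sum_x |x^3\,\omega(S^0 S^x)|<\infty$, is exactly what is needed to make the resulting double sum $\sum_{x,x'\geq1}|c(\epsilon)|\cdot|\text{correlation at distance }|x-x'||$ finite uniformly in $\epsilon$: one factor of $x$ (or $x'$) accounts for the length of the interval of sites at a given separation, and the excess powers tame the two boundary locations relative to $(1/2,0)$, i.e.\ they control how far from the edge a loop encircling $(1/2,0)$ can reach. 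Having a uniform integrable majorant, I would then invoke dominated convergence to pass $\epsilon\to0$ inside $\int d\mu_\infty(\nu)$ and conclude that the limit in~(\ref{ConvStochExcess}) is $0$.

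The main obstacle will be the bookkeeping of the cancellation: making precise that on each loop the alternating signs combine the coefficients $c^\nu_{x,k}(\epsilon)$ into a single squared factor (so that the non-encircling loops genuinely contribute $O(\epsilon)$ rather than $O(1)$), and then producing a bound in terms of the two-point function $\omega(S^0 S^x)$ alone. This requires identifying, for a loop passing through sites $x_1<x_2<\cdots$, that $E_\nu$ of the relevant quadratic form is governed by the pairwise same-loop events, and that summing $|c_{x_i}(\epsilon) - c_{x_j}(\epsilon)|$-type differences along a loop is controlled by its diameter. One must be slightly careful that $\hat S^+_\nu$ as defined in~(\ref{StochExcess}) sums over individual spins-$1/2$ rather than over the block variables $S^x$, so the relevant loop decomposition is the one on the $2S$-fold replicated chain underlying~(\ref{StochRep}); but~(\ref{thirdmoment}) is stated in terms of the physical correlation $\omega(S^0S^x)$, and relating the replicated same-loop probabilities to $\bbP_{\mu_\infty}(x\sim y)$ is a routine consequence of the construction in~\cite{Nachtergaele:1993wk}. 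Once these identifications are in place, the rest is the dominated-convergence argument sketched above.
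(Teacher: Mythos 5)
Your plan identifies the right ingredients---the coefficient decomposition $c^\nu_{x,k}(\epsilon)$, the split by whether the loop through $(x,k)$ encircles $(1/2,0)$, the alternating-sign cancellation on closed loops, and the role of the third moment---and these are exactly the paper's. But one step as stated would fail. You claim ``$c^\nu_{x,k}(\epsilon)\to 0$ pointwise'' and base a dominated-convergence argument on it; on the non-encircling sites the coefficient is $\ep{-\epsilon x}\to 1$, not $0$. What vanishes is the \emph{loop-summed} quantity $\sum_{(x,k)\in\gamma}(-1)^x\ep{-\epsilon x}\to 0$ (by closedness of $\gamma$), and the obvious $\epsilon$-independent majorant for its square, obtained by replacing $\ep{-\epsilon x}$ by $1$, is of order $n_\gamma^2$, which is \emph{not} $\mu$-integrable once summed over all loops $\gamma\in\nu_+$ intersecting $[1,\infty)$: without the exponential, loops far to the right are no longer suppressed. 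So plain dominated convergence is unavailable for this half of the argument, and ``the exponential is killed'' is the opposite of what happens.

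The paper instead splits the square into the encircling and non-encircling contributions (using that $E_\nu$ factorizes over the two disjoint families of loops) and treats them differently. For the encircling piece, the coefficients $\ep{-\epsilon x}-1$ do decrease to $0$ and are bounded by $1$, and the third-moment hypothesis supplies the integrable majorant $\sum_{u<1\le x\le y\le v}\bbP_\mu(u\sim v)\propto\sum_{d\ge1}d^3\abs{\omega(S^0S^d)}$; here your dominated/monotone-convergence sketch is essentially correct. For the non-encircling piece the paper gives a direct $O(\epsilon)$ estimate rather than a limit theorem: it retains the factor $\ep{-\epsilon y_\gamma}$ at the leftmost crossing, bounds the alternating sum by
\begin{equation*}
\Bigl\vert \sum_{(x,k)\in\gamma}\ep{-\epsilon x}(-1)^x \Bigr\vert \ \leq\ \ep{-\epsilon y_\gamma}\,\frac{n_\gamma}{2}\,\bigl(1-\ep{-\epsilon d_\gamma}\bigr),
\end{equation*}
sums the geometric series in $y_\gamma$ (a factor of order $1/\epsilon$), and uses $n_\gamma\leq 2S\,d_\gamma$, $1-\ep{-\epsilon d}\leq\epsilon d$, translation invariance and~(\ref{SpinSpin}) to land on $\epsilon\,S^2\sum_{x\geq0}x^3\abs{\omega(S^0S^x)}$. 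The $1/\epsilon$ from summing over leftmost crossings is beaten by the $\epsilon^2$ from the diameter control. If you want to carry out your proposal, you should replace the dominated-convergence step for the non-encircling loops by this explicit estimate, or exhibit a genuinely $\epsilon$-independent integrable majorant---the naive one you gesture at does not exist.
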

This convergence is the key to proving that the excess spin operator exists. Note that for ground states $\omega$ with a non-vanishing
spectral gap the assumption (\ref{thirdmoment}) is always satisfied due to the 
Exponential Clustering Theorem~\cite{Nachtergaele:2006aa, Hastings:2006aa}.
\begin{thm}\label{thm:ExcessSpin}
Let $\omega$ be a ground state of a Hamiltonian of the form~(\ref{AF_H}), and let $(\caH_\omega,\pi_\omega,\Omega_\omega)$ be its GNS representation. If
\begin{equation*}
\sum_{x\in\bbZ}\abs{x^3\,\omega(S^0 S^x)}<\infty,
\end{equation*}
then the strong limits
\begin{equation*}
\hat{U}^+_g = \slim_{\epsilon\to 0} \ep{\iu g \cdot \pi_\omega\left(\hat{S}^+(\epsilon)\right)},\qquad 
\hat{U}^-_g = \slim_{\epsilon\to 0} \ep{\iu g \cdot \pi_\omega\left(\hat{S}^-(\epsilon)\right)}
\end{equation*}
exist and define infinite-dimensional, strongly continuous representations of $SU(2)$. Moreover:
\begin{enumerate}
\item If $\caU'$ denotes the commutant of the algebra $\caU$, 
\begin{equation*}
\hat{U}^+_g\in\pi_\omega\left(\caA^\Lchain\right)',\qquad \hat{U}^-_g \in\pi_\omega\left(\caA^\Rchain\right)';
\end{equation*}
\item Let $\hat{S}^+$ and $\hat{S}^-$ be the self-adjoint generators of $\hat{U}^+_g$ and $\hat{U}^-_g$. Then,
\begin{equation*}
\hat{S}^+ = \slim_{\epsilon\to 0} \hat{S}^+(\epsilon),\qquad 
\hat{S}^- = \slim_{\epsilon\to 0} \hat{S}^-(\epsilon);
\end{equation*}
\item The dense space $\caH_{\mathrm{loc}}$ is contained in the domain $\caD\subset\caH_\omega$ common to $\hat{S}^{+}_m$ and $\hat{S}^{-}_m$ for all $0\leq m\leq 2S$.
\end{enumerate}
\end{thm}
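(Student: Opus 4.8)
The plan is to deduce Theorem~\ref{thm:ExcessSpin} from Lemma~\ref{lma:StochExcessSpin} in much the same spirit as the finitely-correlated case, with the $L^2$-convergence~(\ref{ConvStochExcess}) playing the role of the rank-one approximation estimate. First I would fix $g$ and observe that since $\hat S^+(\epsilon)$ is a bona fide self-adjoint element of $\caA^\Rchain$, the operators $\ep{\iu g\pi_\omega(\hat S^+(\epsilon))}$ are genuine unitaries on $\caH_\omega$; to apply Lemma~\ref{lma:UnitaryCauchy} with the dense set $\caH_{\mathrm{loc}}$ it suffices to show that for $A\in\caA_{\mathrm{loc}}$ the vectors $\ep{\iu g\pi_\omega(\hat S^+(\epsilon))}\pi_\omega(A)\Omega_\omega$ form a Cauchy net as $\epsilon\to0$ (and likewise with $U_n^*U_m$ replaced by $U_nU_m^*$, which is symmetric here). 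The bridge to Lemma~\ref{lma:StochExcessSpin} is the elementary unitary estimate $\norm{(\ep{\iu g X}-\ep{\iu g Y})\psi}\leq\abs{g}\,\norm{(X-Y)\psi}$ for self-adjoint $X,Y$, applied with $X=\pi_\omega(\hat S^+(\epsilon))$, $Y=\pi_\omega(\hat S^+(\epsilon'))$: this reduces strong Cauchy-ness of the unitaries to strong Cauchy-ness of the generators $\pi_\omega(\hat S^+(\epsilon))$ on $\caH_{\mathrm{loc}}$.

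For the generators I would argue that $\pi_\omega(\hat S^+(\epsilon))\pi_\omega(A)\Omega_\omega$ is Cauchy in $\caH_\omega$ as $\epsilon\to0$. Using the stochastic representation~(\ref{StochRep})--(\ref{E(f)}), the squared norm $\norm{\pi_\omega(\hat S^+(\epsilon)-\hat S^+(\epsilon'))\pi_\omega(A)\Omega_\omega}^2$ expands into an integral over loop configurations of $E_\nu$ of a quadratic expression; inserting the configuration-dependent random variable $\hat S^+_\nu$ of~(\ref{StochExcess}) and applying the triangle inequality in $L^2(d\mu_\infty)$ bounds this by a constant (depending on $A$) times $\int d\mu_\infty(\nu)\,E_\nu[(\hat S^+(\epsilon)-\hat S^+_\nu)^2]+\int d\mu_\infty(\nu)\,E_\nu[(\hat S^+(\epsilon')-\hat S^+_\nu)^2]$, which tends to $0$ by~(\ref{ConvStochExcess}) under the third-moment hypothesis~(\ref{thirdmoment}). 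One must also check that conjugation by the local observable $A$ only shifts the relevant loops by a bounded amount and does not spoil the estimate — loops not meeting $\supp A$ are unaffected and contribute zero excess spin as in the heuristic preceding the lemma. This simultaneously yields item (ii): the strong limit $\hat S^+:=\slim_{\epsilon\to0}\pi_\omega(\hat S^+(\epsilon))$ exists on $\caH_{\mathrm{loc}}$, it is symmetric, and since $\ep{\iu g\hat S^+(\epsilon)}\to\hat U^+_g$ strongly for each $g$ in a one-parameter group, $\hat S^+$ is essentially self-adjoint with closure generating $\hat U^+_g$ (alternatively invoke that a strong-resolvent-type limit of self-adjoint operators whose exponentials converge is self-adjoint).

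The representation property $\hat U^+_{g}\hat U^+_{g'}=\hat U^+_{g+g'}$ for real one-parameter subgroups is inherited from the additivity of the local approximants, $\hat S^+(\epsilon)+\hat S^+(\epsilon)$ trivially, and more precisely from $\ep{\iu g\hat S^+(\epsilon)}\ep{\iu g'\hat S^+(\epsilon)}=\ep{\iu(g+g')\hat S^+(\epsilon)}$ passing to the strong limit; strong continuity in $g$ follows from $\norm{(\hat U^+_g-1)\psi}\leq$ the $L^2$-norm of $\hat S^+_\nu$ times $\abs g$ plus the controlled error, so letting $g\to0$ gives continuity at the identity and hence everywhere by the group law. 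The full $SU(2)$ representation is then obtained by exponentiating $\hat S^+=\hat S^+_3$ together with the other two Lie-algebra generators $\hat S^+_1,\hat S^+_2$ (defined by the same sums with $S^x$ replaced by $S^x_1,S^x_2$), for which the identical argument applies since~(\ref{ConvStochExcess}) and~(\ref{thirdmoment}) are $SU(2)$-covariant statements; that these three generators satisfy the $\mathfrak{su}(2)$ commutation relations on $\caH_{\mathrm{loc}}$, and integrate to a representation of the simply connected group, gives item~(iii): $\caH_{\mathrm{loc}}\subset\caD$ with $\caD$ a common invariant domain for all $\hat S^\pm_m$. Finally, item~(i) is immediate: each $\hat S^+(\epsilon)\in\caA^\Rchain$ commutes with $\caA^\Lchain$, so $\pi_\omega(\ep{\iu g\hat S^+(\epsilon)})\in\pi_\omega(\caA^\Lchain)'$, and the commutant is strongly closed, so $\hat U^+_g\in\pi_\omega(\caA^\Lchain)'$; symmetrically for $\hat U^-_g$.

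I expect the main obstacle to be the bookkeeping in item~(iii): verifying that the three limiting generators actually close into an $\mathfrak{su}(2)$-triple on $\caH_{\mathrm{loc}}$ and that this Lie-algebra representation integrates — one needs $\caH_{\mathrm{loc}}$ (or its span under the group) to be a common core and invariant domain, which requires knowing that $\hat U^\pm_g$ maps $\caH_{\mathrm{loc}}$ into $\caD$ and that the commutators are controlled there. The analytic input for that is precisely the uniform (in $\epsilon$) bounds coming from~(\ref{ConvStochExcess}) and~(\ref{thirdmoment}); the remaining work is to phrase Nelson-type analytic-vector or integrability arguments carefully rather than any genuinely new estimate.
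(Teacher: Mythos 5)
Your outline follows the paper's strategy quite closely: reduce strong convergence of the unitaries to an $L^2$-Cauchy estimate on the generators via a Duhamel/telescoping argument, then control that $L^2$-estimate by inserting the configuration-dependent random variable $\hat S^+_\nu$ and invoking Lemma~\ref{lma:StochExcessSpin}. In fact the paper packages the first step as Lemma~\ref{lma:Cauchy_Gen} (built on Lemma~\ref{lma:UnitaryCauchy}), which is exactly the reduction you describe; note that your estimate $\norm{(\ep{\iu gX}-\ep{\iu gY})\psi}\leq|g|\,\norm{(X-Y)\psi}$ and the cosine bound inside Lemma~\ref{lma:Cauchy_Gen} both implicitly use that $\hat S^+(\epsilon)$ and $\hat S^+(\epsilon')$ commute, which happens to be true here (they are both sums of the diagonal $S^x_3$'s) but is worth flagging.

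The place where your proposal is genuinely incomplete is precisely the step you flag as ``one must also check'': passing from $\omega(A^*(\hat S^+(\epsilon)-\hat S^+(\epsilon'))^2A)$ to a bound by $\int d\mu_\infty E_\nu[(\hat S^+(\epsilon)-\hat S^+_\nu)^2]$ with an $A$-dependent constant. That does \emph{not} follow from a triangle inequality alone, because $E_\nu(A^*(\cdot)A)$ is not a constant multiple of $E_\nu(\cdot)$ and $A$ can intersect the support of $\hat S^+$. The paper handles this with a near/far decomposition: it splits $\hat S^+(\epsilon)$ at a cutoff $L$ into $\hat S^{[1,L)}(\epsilon)+\hat S^{L}(\epsilon)$; the finite sum converges in norm trivially, while for the tail it proves an approximate factorization
$$\int d\mu(\nu)E_\nu\bigl(B^*(\hat S^L(\epsilon)-\hat S^L(\epsilon'))^2B\bigr)\approx\int d\mu(\nu)E_\nu(B^*B)E_\nu\bigl((\hat S^L(\epsilon)-\hat S^L(\epsilon'))^2\bigr)$$
with an error controlled by the probability that a loop connects $\supp B$ to $[L,\infty)$, which is $O(\sum_{x\geq L-l}x|\omega(S^0S^x)|)$. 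Only then does the problem reduce to $B=\idtyty$ and to Lemma~\ref{lma:StochExcessSpin}. Your ``loops not meeting $\supp A$ contribute zero excess spin'' is the right heuristic but is not a proof; the cutoff in $L$ and the loop-crossing estimate are the actual content. On the other hand, your discussion of how the three one-parameter groups assemble into an $SU(2)$ representation and of item~(iii) is more explicit than what the paper writes (the paper's proof is silent on these points), and your treatment of item~(i) matches. In summary: same approach, same key lemmas, one genuine estimate (the cutoff plus factorization error bound) left as a gap.
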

\begin{proof}[Proof of Lemma~\ref{lma:StochExcessSpin}]
For simplicity, we denote $\mu_\infty$ by $\mu$ in the rest of the article. First,
\begin{equation*}
\hat{S}^+(\epsilon)-\hat{S}^+_\nu = \sum_{(x,k)}{}^{\!'}(\ep{-\epsilon x}-1)S^{(x,k)}
+ \sum_{(x,k)}{}^{\!''}\ep{-\epsilon x} S^{(x,k)}
\end{equation*}
where $\sum_{(x,k)}{}^{\!'}$ denotes the sum in~(\ref{StochExcess}) and  $\sum_{(x,k)}{}^{\!''} = \sum_{(x,k)} - \sum_{(x,k)}{}^{\!'}$. Since sites in the two sums are by definition not related by any loops,
\begin{equation*}
E_\nu \Big(\sum_{(x,k)}{}^{\!'}(\cdots)\sum_{(x,k)}{}^{\!''}(\cdots)\Big) = E_\nu \Big(\sum_{(x,k)}{}^{\!'}(\cdots)\Big)E_\nu \Big(\sum_{(x,k)}{}^{\!''}(\cdots)\Big),
\end{equation*}
so that
\begin{multline} \label{TwoTerms}
\int d\mu(\nu) E_\nu\left[(\hat{S}^+(\epsilon)-\hat{S}^+_\nu)^2\right] \\
\leq 2\int d\mu(\nu) E_\nu\Big[ \Big(\sum_{(x,k)}{}^{\!'}(\ep{-\epsilon x}-1)S^{(x,k)}\Big)^2\Big] 
+ 2\int d\mu(\nu) E_\nu\Big[ \Big(\sum_{(x,k)}{}^{\!''}\ep{-\epsilon x} S^{(x,k)}\Big)^2\Big]. 
\end{multline}
 We concentrate on the first term. From~(\ref{EofSpinSpin}), we have
\begin{multline*}
\int d\mu(\nu) E_\nu\Big[ \Big(\sum_{(x,k)}{}^{\!'}(\ep{-\epsilon x}-1)S^{(x,k)}\Big)^2\Big] 
\\ \propto \int d\mu(\nu) \sum_{(x,k),(y,l),x,y\geq1} (\ep{-\epsilon x}-1)(\ep{-\epsilon y}-1) I_\nu\big[(x,k)\sim(y,l)\sim(-\infty,0]\big]
\end{multline*}
which converges monotonically to $0$ as $\epsilon\to0$. Therefore, it  suffices to prove that the integrand is uniformly bounded by an integrable function. We apply $\abs{\exp(-\epsilon x)-1}\leq 1$, and then obtain the bound
\begin{multline*}
\int d\mu(\nu) \sum_{(x,k),(y,l),x,y\geq1} I_\nu\big[(x,k)\sim(y,l)\sim(-\infty,0]\big] \\
= 2 (2S)^2 \sum_{1\leq x\leq y} \bbP_\mu\big((x,1)\sim (y,1)\sim(-\infty,0]\big)
\leq 2 (2S)^2 \sum_{1\leq x\leq y} \bbP_\mu\big(x\sim y\sim(-\infty,0]\big),
\end{multline*}
where we used that by the symmetrization of the vertical lines, the sum over the auxiliary indices $k$ and $l$ can be replaced by a factor $(2S)^2$ and fixing $k=l=1$. Since the loops do not intersect,
\begin{equation*} 
\sum_{1\leq x\leq y} \bbP_\mu\big(x\sim y\sim(-\infty,0]\big) \leq \sum_{u<1\leq x\leq y\leq v} \bbP_\mu(u\sim v)
\end{equation*}
and we use translation invariance to obtain
\begin{align*}
\sum_{u<1\leq x\leq y\leq v} \bbP_\mu(u\sim v) = \sum_{d\geq1}\sum_{1\leq x \leq y \leq v\leq d} \bbP_\mu(0\sim d) \,\propto \sum_{d\geq1}\frac{d(d+1)(d+2)}{6}\vert\omega(S^0S^d)\vert<\infty.
\end{align*}
where we recalled~(\ref{SpinSpin}). Hence, the first term of~(\ref{TwoTerms}) converges to zero.

As for the second term, we have
\begin{align*}
\int d\mu(\nu) E_\nu\Big[ \Big(\sum_{(x,k)}{}^{\!''}\ep{-\epsilon x} S^{(x,k)}\Big)^2\Big]
&\propto \int d\mu(\nu)\sum_{(x,k),(y,l)}{}^{\!''}\ep{-\epsilon (x+y)}(-1)^{\abs{x-y}} I_\nu((x,k)\sim(y,l)) \\
& = \int d\mu(\nu)\sum_{\gamma\in\nu_+}\bigg(\sum_{(x,k)\in\gamma}\ep{-\epsilon x}(-1)^{x}\bigg)^2
\end{align*}
where $\nu_+\subset\nu$ is the subset of loops that intersect $t=0$ only on $[1,\infty)$, we used~(\ref{EofSpinSpin}) in the first equality, and $(-1)^{\abs{x-y}} = (-1)^{(x+y)}$. Now, this formally vanishes at $\epsilon = 0$ by the fact that all loops are closed, but it remains to be seen that the decay of the spin-spin correlation function implies a sufficient decay in the probability that the loops are large. For any $\gamma\in\nu_+$, let $d_\gamma$ be the largest distance between intersections of $\gamma$ and the $t=0$ axis, and $n_\gamma$ the number of such intersections. If $((y,l),0)\in\gamma$ is the leftmost point of $\gamma$, we have the bound
\begin{equation*}
\bigg\vert \sum_{(x,k)\in\gamma}\ep{-\epsilon x}(-1)^{x} \bigg\vert
\leq \ep{-\epsilon y}\frac{n_\gamma}{2}\left(1-\ep{-\epsilon d_\gamma}\right)
\end{equation*}
as the even number of intersections can be paired so that the two sites of each pair belong to different sublattices. Hence,
\begin{equation*}
\int d\mu(\nu) E_\nu\Big[ \Big(\sum_{(x,k)}{}^{\!''}\ep{-\epsilon x} S^{(x,k)}\Big)^2\Big]
\leq \sum_{y\geq1} \frac{\ep{-2\epsilon y}}{4} \int d\mu(\nu) \sum_{k=1}^{2S}\frac{1}{n_{\gamma(y,k)}}n_{\gamma(y,k)}^2\left(1-\ep{-\epsilon d_{\gamma(y,k)}}\right)^2.
\end{equation*}
Now, we use the simple bound $n_\gamma\leq 2S d_\gamma$, the equidistribution over $k$ and translation invariance to conclude that
\begin{align*}
\int d\mu(\nu) E_\nu\Big[ \Big(\sum_{(x,k)}{}^{\!''}\ep{-\epsilon x} S^{(x,k)}\Big)^2\Big]
&\leq \frac{(2S)^2}{4 (\ep{2\epsilon}-1)}  \int d\mu(\nu) d_{\gamma(0,1)}\left(1-\ep{-\epsilon d_{\gamma(0,1)}}\right)^2 \\
&\leq \frac{(2S)^2}{4 (\ep{2\epsilon}-1)}  \int d\mu(\nu) \sum_{x\geq 0}x \left(1-\ep{-\epsilon x}\right)^2 I_\nu(0\sim x) \\
&\leq \frac{(2S)^2}{4 (\ep{2\epsilon}-1)} \sum_{x\geq 0}x \left(1-\ep{-\epsilon x}\right)^2 \bbP_\mu(0\sim x).
\end{align*}
Finally, $1-\ep{-\epsilon x}\leq \epsilon x$ and $\epsilon^2 / (\ep{2\epsilon}-1)\leq \epsilon$ so that
\begin{equation*}
\int d\mu(\nu) E_\nu\Big[ \Big(\sum_{(x,k)}{}^{\!''}\ep{-\epsilon x} S^{(x,k)}\Big)^2\Big]
\leq \epsilon S^2 \sum_{x\geq0} x^3 \, \vert\omega(S^0S^x)\vert.
\end{equation*}
Hence, the second term also vanishes as $\epsilon\to0$.
\end{proof}

Before we go into the proof proper of the theorem, we need the following technical lemma.

\begin{lemma}\label{lma:Cauchy_Gen}
Let $(U_n(\cdot))_{n\in\bbN}$ be a sequence of unitary groups on $\caH$ generated by bounded self-adjoint operators $(A_n)_{n\in\bbN}$ such that
\begin{equation}\label{Cauchy_Gen}
\lim_{\min(n,m)\to\infty} \braket{\phi}{(A_n-A_m)^2\psi} = 0
\end{equation}
for all $\phi,\psi$ in a dense subset $\caD\subset\caH$. Then there exists a strongly continuous unitary group $U(t) = \exp(\iu A t)$, with $\caD \subset \mathrm{Dom}(A)$, such that
\begin{equation*}
\slim_{n\to\infty} U_n(t) = U(t), \qquad \text{and}\qquad \lim_{n\to\infty} A_n\phi = A\phi,\quad \phi\in\caD;
\end{equation*}
\end{lemma}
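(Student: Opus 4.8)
The plan is to run essentially the same argument as in Lemma~\ref{lma:UnitaryCauchy}, but tracking the generators rather than just the unitaries. First I would observe that condition~\eqref{Cauchy_Gen}, via the Cauchy--Schwarz inequality $\abs{\braket{\phi}{(A_n-A_m)\psi}}^2 \leq \braket{\phi}{\phi}\,\braket{\psi}{(A_n-A_m)^2\psi}$, shows that $(A_n\phi)_{n\in\bbN}$ is weakly Cauchy for every $\phi\in\caD$; in fact taking $\phi = (A_n-A_m)\psi$ in~\eqref{Cauchy_Gen} (which is legitimate since the $A_n$ are bounded, though one must be a little careful that the vector $\phi$ here varies with $n,m$ --- better: use $\Vert(A_n-A_m)\psi\Vert^2 = \braket{\psi}{(A_n-A_m)^2\psi}$ directly, as $A_n-A_m$ is self-adjoint) shows that $(A_n\psi)$ is \emph{norm} Cauchy for every $\psi\in\caD$. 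Hence $A\psi := \lim_{n\to\infty} A_n\psi$ exists in $\caH$ for all $\psi\in\caD$, defining a linear operator $A$ on the dense domain $\caD$. Symmetry of $A$ on $\caD$ follows by passing to the limit in $\braket{\phi}{A_n\psi} = \braket{A_n\phi}{\psi}$.

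Next I would address self-adjointness and the existence of the unitary group. The operator $A$ defined on $\caD$ is symmetric but a priori only has $\caD$ as domain; to get a strongly continuous unitary group I would not try to prove $A$ is essentially self-adjoint abstractly, but instead build $U(t)$ directly as a strong limit and \emph{then} identify its generator. Concretely: for fixed $t$, the Duhamel-type estimate
\begin{equation*}
\Vert (U_n(t) - U_m(t))\psi \Vert = \Bigl\Vert \int_0^t U_n(t-r)\,\iu(A_n - A_m)\,U_m(r)\psi\,\rd r \Bigr\Vert \leq \abs{t}\sup_{r\in[0,t]} \Vert (A_n-A_m) U_m(r)\psi\Vert
\end{equation*}
would give strong Cauchyness of $U_n(t)$ on $\caD$ --- \emph{provided} one controls $\Vert (A_n-A_m)U_m(r)\psi\Vert$, not just $\Vert(A_n-A_m)\psi\Vert$. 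This is the main obstacle: $U_m(r)\psi$ need not lie in $\caD$, so~\eqref{Cauchy_Gen} does not directly apply to it. In the setting of Theorem~\ref{thm:ExcessSpin} this is circumvented because the relevant $A_n = \pi_\omega(\hat S^+(\epsilon))$ commute with each other (they are all sums of the commuting-in-the-relevant-sense spin operators) so $U_m(r)$ maps $\caD$ into itself and commutes with $A_n - A_m$; I would therefore either (a) add the hypothesis that the $A_n$ mutually commute, which is what is actually used, or (b) fall back on Lemma~\ref{lma:UnitaryCauchy} to get the unitary limit $U(t)$ first and only use~\eqref{Cauchy_Gen} for the generator statement. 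Option (b) is cleanest: apply Lemma~\ref{lma:UnitaryCauchy} to the sequence $U_n(t)$ for each fixed $t$ (checking~\eqref{Cauchy_Gen_1} via $\Vert(1 - U_n(t)\str U_m(t))\psi\Vert \leq \Vert(U_n(t)-U_m(t))\psi\Vert$, and for that difference use the Duhamel bound together with $\Vert(A_n-A_m)U_m(r)\psi\Vert \leq \Vert(A_n-A_m)\psi\Vert + \Vert[A_n-A_m,U_m(r)]\psi\Vert$, the commutator vanishing under the mutual-commutativity assumption) to obtain a unitary $U(t)$.

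Finally I would check that $t\mapsto U(t)$ is a strongly continuous one-parameter group: the group law $U(t)U(s) = U(t+s)$ passes to the limit from $U_n(t)U_n(s) = U_n(t+s)$, and strong continuity at $t=0$ follows from $\Vert(U(t)-1)\psi\Vert = \lim_n \Vert(U_n(t)-1)\psi\Vert \leq \lim_n \abs{t}\,\Vert A_n\psi\Vert = \abs{t}\,\Vert A\psi\Vert$ for $\psi\in\caD$, together with density of $\caD$ and uniform boundedness of the $U(t)$. By Stone's theorem $U(t) = \exp(\iu \tilde A t)$ for a self-adjoint $\tilde A$; to see $\caD\subset\mathrm{Dom}(\tilde A)$ with $\tilde A\psi = A\psi = \lim_n A_n\psi$, compute $\tfrac{1}{\iu t}(U(t)-1)\psi = \lim_n \tfrac{1}{\iu t}(U_n(t)-1)\psi$ and estimate, uniformly in $t$ near $0$, $\Vert \tfrac{1}{\iu t}(U_n(t)-1)\psi - A_n\psi\Vert = \Vert \tfrac1t\int_0^t (U_n(r)-1)A_n\psi\,\rd r\Vert \leq \sup_{r\in[0,t]}\Vert(U_n(r)-1)A_n\psi\Vert$, which is small once $t$ is small, using that $\Vert A_n\psi\Vert$ is bounded and $U_n(r)$ is strongly continuous uniformly enough (again the mutual-commutativity gives $(U_n(r)-1)A_n\psi = A_n(U_n(r)-1)\psi$ with $\Vert A_n(U_n(r)-1)\psi\Vert$ controlled since $A_n\to A$ strongly on $\caD$ and $(U_n(r)-1)\psi \to (U(r)-1)\psi$). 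Interchanging the limits $t\to0$ and $n\to\infty$ then yields $\tilde A\psi = \lim_n A_n\psi = A\psi$, completing the proof with $A := \tilde A$.
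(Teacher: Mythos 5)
Your plan follows essentially the same skeleton as the paper's proof: norm-Cauchyness of $A_n\psi$ on $\caD$ extracted from (\ref{Cauchy_Gen}), strong convergence of the unitaries via Lemma~\ref{lma:UnitaryCauchy}, inheritance of the group law, and identification of the generator on $\caD$ by differentiating at $t=0$. The one substantive difference is where the Cauchy estimate for the unitaries comes from. The paper bounds $\abs{\braket{\phi}{(1-U_n(t)\str U_m(t))\psi}}^2$ by $2\Vert\phi\Vert^2\braket{\psi}{[1-\cos((A_n-A_m)t)]\psi}$ ``by functional calculus'', which tacitly uses $U_n(t)\str U_m(t)=\ep{\iu t(A_m-A_n)}$, i.e.\ $[A_n,A_m]=0$; your Duhamel argument makes this commutativity explicit and promotes it to a hypothesis. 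That caution is well founded: condition (\ref{Cauchy_Gen}) alone does not imply the conclusion. For instance, let $T$ be symmetric but not essentially self-adjoint on a dense $\caD$, with two distinct self-adjoint extensions $T_0,T_1$, and let $A_n$ be bounded spectral truncations of $T_0$ for even $n$ and of $T_1$ for odd $n$; then $A_n\psi\to T\psi$ in norm for $\psi\in\caD$, so (\ref{Cauchy_Gen}) holds, but $U_n(t)$ has the two distinct strong limit points $\ep{\iu tT_0}$ and $\ep{\iu tT_1}$ and does not converge. In the intended application (Theorem~\ref{thm:ExcessSpin}) the generators $\pi_\omega(g\cdot\hat{S}^+(\epsilon))$ for a fixed direction $g$ are functions of a commuting family of spin operators, so your amended statement is exactly what is used there; in this sense your version is the more careful one.

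The only weak point in your proposal is the final interchange of limits. The bound $\Vert\tfrac{1}{\iu t}(U_n(t)-1)\psi-A_n\psi\Vert\leq\sup_{r\in[0,t]}\Vert(U_n(r)-1)A_n\psi\Vert$ is fine, but the claim that this is small uniformly in $n$ for small $t$ because ``$U_n(r)$ is strongly continuous uniformly enough'' is not justified as written: $A_n\psi$ is a moving vector, the $\Vert A_n\Vert$ need not be uniformly bounded, and no equicontinuity of the family $U_n$ is available. The paper sidesteps this with the three-term decomposition of $\tfrac1t(1-U(t))\psi-A\psi$, where the term $\tfrac1t(U(t)-U_n(t))\psi$ is controlled uniformly in $t$ directly by its estimate (\ref{FirstConv}). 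Within your scheme the cleanest repair is to write $\tfrac{1}{\iu t}(U_n(t)-1)\psi=\tfrac1t\int_0^t U_n(r)A_n\psi\,\rd r$, pass to the limit $n\to\infty$ at fixed $t$ (using $A_n\psi\to A\psi$ in norm and $U_n(r)\to U(r)$ strongly, with domination in $r$), obtaining $\tfrac{1}{\iu t}(U(t)-1)\psi=\tfrac1t\int_0^t U(r)A\psi\,\rd r$, and then let $t\to0$ by strong continuity of $U$; this yields $\caD\subset\mathrm{Dom}(\tilde A)$ and $\tilde A\psi=A\psi$ without any equicontinuity claim. With that adjustment, and with the commutativity hypothesis stated, your argument is complete.
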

\begin{proof}
First, for any fixed $t$ and $\phi,\psi\in\caD$, we have that
\begin{align} \label{FirstConv}
\abs{\braket{\phi}{(1-U_n(t)\str U_m(t))\psi}}^2
&\leq 2\Vert\phi\Vert^2\braket{\psi}{[1-\cos((A_n-A_m)t)]\psi} \\
&\leq 2t^2\Vert\phi\Vert^2 \braket{\psi}{(A_n-A_m)^2\psi} \nonumber
\end{align}
where we used Schwarz' inequality and then the functional calculus. Hence, (\ref{Cauchy_Gen}) implies that $\min_{n,m}\braket{\phi}{U_n(t)\str U_m(t)\psi}$ converges to $\braket{\phi}{\psi}$, and similarly for $\min_{n,m}\braket{\phi}{U_m(t)\str U_n(t)\psi}$, which yield the existence of a limiting unitary $U(t)$ by Lemma~\ref{lma:UnitaryCauchy}.

The convergence of $A_n\psi$ follows from $\Vert(A_n-A_m)\psi\Vert^2 = \braket{\psi}{(A_n-A_m)^2\psi}$ and~(\ref{Cauchy_Gen}) by completeness. Finally, the group property of $U(t)$ is inherited from $U_n(t)$, and
\begin{equation*}
\frac{1}{t}(1-U(t))\psi - A\psi = \left[\frac{1}{t}(1-U_n(t)) - A_n\right]\psi + (A_n-A)\psi - \frac{1}{t}(U-U_n(t))\psi 
\end{equation*}
where the last term converges to zero as $(n\to\infty)$, uniformly in $t$ by~(\ref{FirstConv}). This yields the strong differentiability on $\caD$, with $A$ being the generator of $U(t)$. 
\end{proof}
\begin{proof}[Proof of Theorem~\ref{thm:ExcessSpin}]
By Lemma~\ref{lma:Cauchy_Gen}, it suffices to prove that
\begin{equation*}
\lim_{\max(\epsilon,\epsilon')\to0}\omega\left(A\str(\hat{S}^+(\epsilon)-\hat{S}^+(\epsilon'))^2B\right) = 0,
\end{equation*}
which can be reduced further to the case $A=B$ by Cauchy-Schwarz' inequality. Moreover,
\begin{align*}
\omega\left(B\str(\hat{S}^+(\epsilon)-\hat{S}^+(\epsilon'))^2B\right) \leq 2\omega\left(B\str(\hat{S}^{[1,L)}(\epsilon)-\hat{S}^{[1,L)}(\epsilon'))^2B\right) + 2\omega\left(B\str(\hat{S}^L(\epsilon)-\hat{S}^L(\epsilon'))^2B\right),
\end{align*}
where we used that $\omega(B\str(X-Y)^2B)\geq 0$ implies that $\omega(B\str(XY+YX)B)\leq\omega(B\str(X^2+Y^2)B)$ for any observable $X,Y$. The first term converges to zero by the continuity of $\omega$ and the norm convergence of $\hat{S}^{[1,L)}$ for $L<\infty$. In the stochastic representation, the second term reads
\begin{equation*}
\omega\left(B\str(\hat{S}^L(\epsilon)-\hat{S}^L(\epsilon'))^2B\right) = \int d\mu(\nu)E_\nu\left(B\str(\hat{S}^L(\epsilon)-\hat{S}^L(\epsilon'))^2B\right)
\end{equation*}
which approximately factorizes as $\int d\mu(\nu)E_\nu\left(B\str B\right)E_\nu\left((\hat{S}^L(\epsilon)-\hat{S}^L(\epsilon'))^2\right)$. Indeed, the difference is given by the loops that connects the finite support of $B\str B$ and $[L,\infty)$. Each such loop $\gamma$ contributes at most $n_\gamma$ terms from $(\hat{S}^L(\epsilon)-\hat{S}^L(\epsilon'))^2$, so that
\begin{multline*}
\Bigg\vert\int d\mu(\nu) E_\nu\left(B\str(\hat{S}^L(\epsilon)-\hat{S}^L(\epsilon'))^2B\right)-\int d\mu(\nu)E_\nu\left(B\str B\right)E_\nu\left((\hat{S}^L(\epsilon)-\hat{S}^L(\epsilon'))^2\right)\Bigg\vert \\
\leq 2S K_1(B) \int d\mu(\nu) \sum_{\gamma\in\nu} d_\gamma \, I\left(\mathrm{supp}(B)\stackrel{\gamma}{\sim}[L,\infty)\right) \leq K_2(B) \sum_{x\geq L-l} x \vert\omega(S^0S^x)\vert,
\end{multline*}
where $\mathrm{supp}(B) \in \caA^{(-\infty,l]}$ and we assumed $l<L$. This is arbitrarily small for $L$ large enough. Therefore, it suffices to consider the case $B=\idtyty$. But this is an immediate consequence of the inequality
\begin{equation*}
\int d\mu(\nu) E_\nu\left[(\hat{S}^+(\epsilon)-\hat{S}^+(\epsilon'))^2\right] \leq 2\int d\mu(\nu) E_\nu\left[(\hat{S}^+(\epsilon)-\hat{S}^+_\nu)^2\right] + 2 \int d\mu(\nu) E_\nu\left[(\hat{S}^+(\epsilon')-\hat{S}^+_\nu)^2\right]
\end{equation*}
that follows again from the positivity of $\omega((\hat{S}^+(\epsilon)-\hat{S}^+(\epsilon'))^2)$, and Lemma~\ref{lma:StochExcessSpin}.
\end{proof}
%


\section{Boundary operators, edge representations and correlation structure}\label{sec:boundary-bulk}

Now, we put together the various elements introduced in the previous sections and clarify the relationship between the boundary operators of Section~\ref{sec:Excess} and the abstract representations of $G$ at the edges of a symmetric gapped ground state phase, see Section~\ref{sec:GTransport}. We find an exact relation for frustration free systems. This allows us to show that the representation content of the boundary operator restricted to the kernel of the half-chain is an invariant of a $G$-symmetric phase. For simplicity, we continue to assume the uniqueness and therefore translation invariance of the ground state in the thermodynamic limit. 

We start by interpreting the boundary operator arising for finitely correlated systems in the light of a $G$ action. 
\begin{lemma}\label{lma:Implementable}
Assume that the conditions of Theorem~\ref{thm:ExcessSpin} are satisfied. Then $\theta_g^\Rchain$ and $\theta_g^\Lchain$ are unitarily implementable on the GNS Hilbert space $\caH_\omega$.
\end{lemma}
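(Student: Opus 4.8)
The plan is to use the excess spin operators constructed in Theorem~\ref{thm:ExcessSpin} to split the global rotation $\theta_g$ (which is already unitarily implemented on $\caH_\omega$ by translation invariance of $\omega$) into a product of a rotation of the right half-chain and a rotation of the left half-chain. First I would recall that, because $\omega$ is translation invariant and $G$-invariant, the global automorphism $\theta_g^{\bbZ}$ is unitarily implemented on $\caH_\omega$: the implementing unitary is $\Gamma_g = \slim_{\epsilon\to 0}\ep{\iu g\,\pi_\omega(\hat S(\epsilon))}$ where $\hat S(\epsilon) = \sum_{x\in\bbZ}\ep{-\epsilon|x|}S^x$, or more simply the standard GNS implementer satisfying $\Gamma_g\pi_\omega(A)\Gamma_g^* = \pi_\omega(\theta_g(A))$ and $\Gamma_g\Omega_\omega = \Omega_\omega$ (the latter because $\omega$ is $G$-invariant). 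The point is that $\hat S(\epsilon) = \hat S^+(\epsilon) + \hat S^-(\epsilon) + (\text{finitely many terms near the origin})$, so passing to the limit and using Theorem~\ref{thm:ExcessSpin} one should get $\Gamma_g = \hat U^+_g\,\hat U^-_g\,(\text{local correction})$.

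The key steps, in order: (1) Verify that $\hat U^+_g$ implements $\theta_g^{\Rchain}$ on the subalgebra $\pi_\omega(\caA^{\Rchain})$, and $\hat U^-_g$ implements $\theta_g^{\Lchain}$ on $\pi_\omega(\caA^{\Lchain})$. For $A\in\caA^{[1,\infty)}$ supported in $[1,N]$, one has for $\epsilon$ small and $L$ large that $\ep{\iu g\pi_\omega(\hat S^+(\epsilon))}\pi_\omega(A)\ep{-\iu g\pi_\omega(\hat S^+(\epsilon))}$ is approximately $\pi_\omega(\ep{\iu g\sum_{x=1}^{M}\ep{-\epsilon x}S^x}\,A\,\ep{-\iu g\sum_{x=1}^{M}\ep{-\epsilon x}S^x})$ for $M\gg N$; the terms with $x>N$ commute with $A$, and as $\epsilon\to 0$ the finitely many remaining weights $\ep{-\epsilon x}$, $x\le N$, tend to $1$, so the conjugation tends to $\pi_\omega(U^{[1,N]}_g A (U^{[1,N]}_g)^*) = \pi_\omega(\theta^{\Rchain}_g(A))$. (2) Check that $\hat U^+_g$ and $\hat U^-_g$ commute, which is immediate since $\hat S^+(\epsilon)$ and $\hat S^-(\epsilon)$ have disjoint supports, hence commute, hence so do their strong limits. (3) On a generic local observable $A$ supported away from the origin, $\theta_g^{\Rchain}$ and $\theta_g^{\Lchain}$ act nontrivially only on the part of $A$ in $[1,\infty)$ resp. $(-\infty,0]$, and these two restrictions are implemented by $\hat U^\pm_g$; this directly gives the unitary implementation of each half-chain automorphism on all of $\caH_\omega$, since $\caH_{\mathrm{loc}}$ is dense and the $\hat U^\pm_g$ are bounded. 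One then only needs that $\theta_g^{\Rchain}$ extended to all of $\caA^{\bbZ}$ (acting trivially on $\caA^{\Lchain}$) is implemented by $\hat U^+_g$, which follows from step (1) plus the fact that $\hat U^+_g\in\pi_\omega(\caA^{\Lchain})'$ (part (i) of Theorem~\ref{thm:ExcessSpin}), so $\hat U^+_g$ commutes with $\pi_\omega(A)$ for $A\in\caA^{\Lchain}$.

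The main obstacle I expect is step (1): making precise the interchange of the two limits $\epsilon\to 0$ and the ``$M\gg N$'' truncation when conjugating, and verifying that the error terms coming from the tail $\sum_{x>N}\ep{-\epsilon x}S^x$ (which commute with $A$ but whose exponential is only defined in the limit) really drop out. The clean way around this is not to manipulate $\ep{\iu g\pi_\omega(\hat S^+(\epsilon))}$ directly but to use that $\hat U^+_g = \slim \ep{\iu g\pi_\omega(\hat S^+(\epsilon))}$ already exists (Theorem~\ref{thm:ExcessSpin}), compute $\hat U^+_g\pi_\omega(A)(\hat U^+_g)^*$ as a strong limit, and on each approximant split $\hat S^+(\epsilon) = \sum_{x=1}^{N}\ep{-\epsilon x}S^x + R_N(\epsilon)$ with $[R_N(\epsilon),A]=0$, so that the conjugation by $\ep{\iu g\pi_\omega(R_N(\epsilon))}$ is trivial and one is left with the conjugation by the exponential of a \emph{bounded} operator $\sum_{x=1}^N\ep{-\epsilon x}S^x$, for which $\epsilon\to 0$ is elementary. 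The strong continuity and group property of $\hat U^\pm_g$ from Theorem~\ref{thm:ExcessSpin} then finish the argument, and the analogous statements for $\theta_g^{\Lchain}$ follow by the left-right symmetry already used throughout Section~\ref{sec:Excess}.
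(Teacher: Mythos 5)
Your proposal is correct and takes essentially the same route as the paper: the core is your step (1), and the refinement you give in the ``obstacle'' paragraph---split $\hat S^+(\epsilon)=\sum_{x=1}^N\ep{-\epsilon x}S^x+R_N(\epsilon)$ with $[R_N(\epsilon),A]=0$, so the tail drops out of the conjugation and only the finite, norm-convergent head survives as $\epsilon\to 0$---is exactly the mechanism the paper uses when it evaluates $\braket{\psi}{(\hat U^\Rchain_g)^*\pi_\omega(A)\hat U^\Rchain_g\phi}$ as the diagonal limit of matrix elements and then extends by density. The opening maneuver with the global implementer $\Gamma_g$ and your steps (2)--(3) are not needed once step (1) is established.
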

\begin{proof}
We consider the right infinite chain. Theorem~\ref{thm:ExcessSpin} implies the existence of the excess spin operator. First, we claim that the representation $U^\Rchain_g$ implements $\theta^\Rchain_g$ on the local algebra $\cup_{\Lambda\subset\Rchain}\caA^\Lambda$. Indeed, for any local $A$, and vectors $\psi,\phi\in\caH_\omega$, we have
\begin{align*}
\braket{\psi}{U^{\Rchain\,*}_{g}\pi_\omega(A)U^\Rchain_{g}\phi}
&=\lim_{\epsilon,\epsilon'\to0}\braket{\ep{\iu g\cdot\pi_\omega(S^+(\epsilon))}\psi}{\pi_\omega(A)\ep{\iu g\cdot\pi_\omega(S^+(\epsilon'))}\phi} \\
&= \lim_{\epsilon,\epsilon'\to0}\braket{\psi}{\pi_\omega\left(\ep{-\iu g\cdot S^+(\epsilon)}A\ep{\iu g\cdot S^+(\epsilon')}\right)\phi} \\
&= \braket{\psi}{\pi_\omega\left(\theta^\Rchain_g(A)\right)\phi}
\end{align*}
where we used in the last equality that the GNS representation is continuous and that $A$ has finite support. We conclude by density and again the continuity of $\pi_\omega$ and of $\theta_g$ that this extends to $\caA^\Rchain$.
\end{proof}

In the remainder of this section we restrict our attention to quantum spin chains with a frustration-free, translation invariant
finite-range interaction. By this, we mean a  system with a translation invariant interaction 
with ground state space $\caG^\Lambda$ which satisfies the following intersection property: There exists $l\geq1$ such that 
\begin{equation*}
\caG^{[1,L]} = \bigcap_{k=1}^{L-l}\otimes_{x=1}^{k-1}\caH^x\otimes\caG^{[k,k+l]}\otimes_{y=k+l+1}^L\caH^y
\end{equation*}
for all $L\geq l$. In other words, the global energy minimizer also minimizes the energy locally.
In this case, all ground states have exactly the same energy, and furthermore,
\begin{equation*}
\omega\in\caS^\bbZ\:\Longrightarrow\:
\omega\upharpoonright_{\caA^\Rchain}\in\caS^\Rchain\text{ and }
\omega\upharpoonright_{\caA^\Lchain}\in\caS^\Lchain.
\end{equation*}
This yields the following immediate corollary that relates the concrete boundary operator to the abstract representation of $G$ on the set of edge states.
\begin{prop}\label{prop:FFGS}
The two orbits
\begin{equation*}
\caK:= \{U_g^+\Omega_\omega:g\in G\}\subset\caH_\omega,\qquad\caV:=\{\Theta^{\Rchain}_g\circ\omega\upharpoonright_{\caA^\Rchain}:g\in G\}\subset \caS^\Rchain
\end{equation*}
carry equivalent representations of $G$ and are subrepresentations of $\caS^\Rchain$.
\end{prop}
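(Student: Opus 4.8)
The plan is to exhibit an explicit $G$-equivariant bijection between the two orbits $\caK$ and $\caV$ and to check that it intertwines the two $G$-actions. The natural candidate is the map sending the vector $U_g^+\Omega_\omega\in\caH_\omega$ to the state $\Theta^\Rchain_g(\omega\upharpoonright_{\caA^\Rchain})\in\caS^\Rchain$; equivalently, it should be the restriction to $\caK$ of the map $\psi\mapsto \braket{\psi}{\pi_\omega(\cdot)\psi}$ from the unit sphere of $\caH_\omega$ to states on $\caA^\Rchain$, precomposed with $U_g^+\Omega_\omega\mapsto \omega\circ\theta^\Rchain_g$. First I would record that, by the frustration-free hypothesis and the stated implication $\omega\in\caS^\bbZ\Rightarrow\omega\upharpoonright_{\caA^\Rchain}\in\caS^\Rchain$, each $\Theta^\Rchain_g(\omega\upharpoonright_{\caA^\Rchain})$ does lie in $\caS^\Rchain$ (using part (ii) of Theorem~\ref{thm:G-rep}, which says $\caS^\Rchain$ is $G$-invariant), so $\caV$ is genuinely a subset of $\caS^\Rchain$ and carries the restricted representation $\Theta^\Rchain_{g}$; this takes care of the "subrepresentation of $\caS^\Rchain$" claim for $\caV$.

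Next I would analyze $\caK$. Since $g\mapsto U_g^+$ is a strongly continuous unitary representation of $G$ on $\caH_\omega$ by Theorem~\ref{thm:ExcessSpin_FCS} (or Theorem~\ref{thm:ExcessSpin}), the orbit $\caK=\{U_g^+\Omega_\omega\}$ spans a $G$-invariant subspace of $\caH_\omega$, and the representation it carries is exactly the subrepresentation of $g\mapsto U_g^+$ generated by the cyclic vector $\Omega_\omega$. The key computational step, which is essentially Lemma~\ref{lma:Implementable}, is the identity
\begin{equation*}
\braket{U_g^+\Omega_\omega}{\pi_\omega(A)\,U_g^+\Omega_\omega} = \omega\!\left(\theta^\Rchain_g(A)\right),\qquad A\in\caA^\Rchain,
\end{equation*}
obtained by first taking $A$ local (where $U_g^+$ implements $\theta^\Rchain_g$, using $U_g^+\in\pi_\omega(\caA^\Lchain)'$ only through the implementation statement already proved) and then passing to all of $\caA^\Rchain$ by density and continuity of $\pi_\omega$ and $\theta^\Rchain_g$. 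This shows that the vector state of $U_g^+\Omega_\omega$ restricted to $\caA^\Rchain$ is precisely $\omega\circ\theta^\Rchain_g = \Theta^\Rchain_g(\omega\upharpoonright_{\caA^\Rchain})$, which is the element of $\caV$ we want to associate to it; in particular the assignment $U_g^+\Omega_\omega\mapsto \Theta^\Rchain_g(\omega\upharpoonright_{\caA^\Rchain})$ is well defined (it does not depend on the choice of $g$ representing a given point of $\caK$), and it is surjective onto $\caV$ by construction.

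To upgrade this bijection of orbits to an equivalence of representations, I would pass to the closed linear spans: let $\caH_\caK\subset\caH_\omega$ be the closure of $\mathrm{span}\,\caK$ and let $\bar{\caV}$ be the span of $\caV$ inside $\bar{\caS}^\Rchain$. The map $W:\,U_g^+\Omega_\omega\mapsto \Theta^\Rchain_g(\omega\upharpoonright_{\caA^\Rchain})$ extends linearly; one checks it is well defined on finite linear combinations and isometric (or at least invertible) for the natural topologies by comparing inner products on the $\caH_\omega$ side with the corresponding pairings on the state side via the identity above applied to matrix elements $\braket{U_g^+\Omega_\omega}{\pi_\omega(A)U_h^+\Omega_\omega}$; since $U_g^{+*}U_h^+ = U_{h-g}^+$ (one-parameter, or more generally the group law in the relevant neighborhood), these matrix elements are again expressible through $\omega$ and $\theta^\Rchain$, matching the pairing of $\Theta^\Rchain_g(\omega\upharpoonright)$ and $\Theta^\Rchain_h(\omega\upharpoonright)$. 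Equivariance $W\circ U_h^+ = \Theta^\Rchain_h\circ W$ on the orbit is immediate from $U_h^+(U_g^+\Omega_\omega)=U_{h g}^+\Omega_\omega$ mapping to $\Theta^\Rchain_{hg}(\omega\upharpoonright)=\Theta^\Rchain_h(\Theta^\Rchain_g(\omega\upharpoonright))$, and extends by linearity and continuity. The main obstacle is the bookkeeping around whether $g\mapsto U_g^+$ is only defined as a local one-parameter group near the identity (as in the proof of Theorem~\ref{thm:ExcessSpin_FCS}, where $g\in(-s,s)$) or as a genuine representation of all of $G$; I would handle this either by invoking the strong-continuity-plus-local-group conclusion of those theorems to get a bona fide representation of $G$, or by noting that $\caK$ and its span, together with the intertwining identity, only ever require group elements in a fixed neighborhood and then using connectedness of $G$ to cover it, so that the equivalence statement is insensitive to this point.
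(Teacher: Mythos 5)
Your proposal follows essentially the same route as the paper's proof: define the natural $G$-equivariant bijection between the orbits $\caK$ and $\caV$ (the paper writes it as $\phi:\caV\to\caK$, $\phi(\Theta^\Rchain_g\circ\omega)=U_g^+\Omega_\omega$; you write its inverse), verify equivariance from the group laws for $U^+_g$ and $\Theta^\Rchain_g$, and note $\caV\subset\caS^\Rchain$ by the frustration-free property. Your extra paragraph about matching inner products and constructing an isometry $W$ on closed linear spans is not needed for the statement being proved: $\bar\caV\subset\bar\caS^\Rchain$ carries no canonical Hilbert-space structure, and the proposition asks only for a linear $G$-equivariant identification of the two orbits as subrepresentations, which the explicit bijection already provides.
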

\begin{proof}
The isomorphism $\phi:\caV\rightarrow\caK$ is given by $\phi(\Theta^{\Rchain}_g\circ\omega) = U_g^+\Omega_\omega$. Indeed, any vector $\nu\in\caV$ is given by $\nu = \Theta^{\Rchain}_g\circ\omega$ for a $g\in G$ and
\begin{equation*}
\phi(\Theta^{\Rchain}_{g'}\circ\nu) = \phi(\Theta^{\Rchain}_{g'}\circ\Theta^{\Rchain}_g\circ\omega) = U_{g'g}^+\Omega_\omega = U_{g'}\phi(\nu).
\end{equation*}
By construction, $\caV$ is a subrepresentation of $\caS^\Rchain$.
\end{proof}

In the following, we furthermore assume that ground states are finitely correlated, i.e., are of matrix product form. 
We recall that, in this situation, the $G$-invariance is expressed by two representations $U_g$ and $u_g$ in
the special unitary groups of $\caA^{0}$, and $\caB$, respectively, with an intertwiner $V$ with the properties discussed 
before (see (\ref{intertwine}) and (\ref{FCS_Spectrum})). We also recall that for any  $\omega\in\caS^{[1,\infty)}$ there is density 
matrix  $\sigma_\omega \in \caB$ such that for all $A\in \caA_{loc}$ we have
\begin{equation}\label{density_matrix}
\omega(A) =\Tr \left(\sigma_\omega \bbE_A (\idtyty)\right).
\end{equation}
Not surprisingly, there is an intimate relation between the boundary representation $U_g^+$, the half-chain edge representation $\Theta_g^{[1,\infty)}$, and the auxiliary representation $u_g$ used in the construction of the matrix product state state.

In the GNS representation of the ground state on the infinite chain ($\Gamma = \bbZ$), the set of vectors
\begin{equation*}
\caG^+:=\overline{\{\pi_\omega(A)\Omega_\omega : A\in\caA_{loc}, \mathrm{supp}(A)\subset (-\infty,0]\}}
\end{equation*}
represents the ground state space of the Hamiltonian on the right half-infinite chain. Since $U_g^+\in\pi_\omega(\caA^{(-\infty,0]})'$ and $\Omega_\omega$ is $G$-invariant, we have that
\begin{equation*}
U_g^+\pi_\omega(A)\Omega_\omega = \pi_\omega(A)(U_g^-)\str\Omega_\omega = (U_g^-)\str\pi_\omega(\theta_g^{-1}(A))\Omega_\omega \in \caG^+,
\end{equation*}
so that $\caG^+$ is invariant under the action of $U_g^+$, and the infinite-dimensional representation $U_g^+\upharpoonright_{\caG^+}$ is again that of the action of $G$ on the right half-infinite chain. 
\begin{thm} \label{thm:FF}
1. $U_g^+\upharpoonright_{\caG^+}$ contains only the finite-dimensional representation $u_g$, with infinite multiplicity. \\
2. In the matrix product representation~(\ref{density_matrix}), the action of $\Theta_g^{[1,\infty)}$ on $\omega\upharpoonright_{\caA^\Rchain}$ is implemented by ${\rm Ad}_{u_g\str}$ on~$\sigma_\omega$.
\end{thm}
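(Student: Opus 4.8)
The plan is to work entirely inside the matrix product structure recalled before the theorem. For part~2, the starting point is the characterization \eqref{density_matrix}: for $\omega\in\caS^{[1,\infty)}$ the state is encoded by the density matrix $\sigma_\omega\in\caB$. First I would compute how $\Theta^{[1,\infty)}_g$ acts on $\sigma_\omega$. For a local $A\in\caA^{[1,\infty)}_{loc}$ with $\supp(A)\subset[1,n]$ we have $\Theta^{[1,\infty)}_g(\omega)(A)=\omega(\theta^{[1,\infty)}_g(A))=\Tr(\sigma_\omega\bbE_{\theta_g^{[1,\infty)}(A)}(\idtyty))$. The key computation is that $\bbE_{U_g^{\otimes n}AU_g^{*\otimes n}}(b)=u_g\bbE_A(u_g\str b u_g)u_g\str$, which follows by iterating the intertwining identity \eqref{intertwine} through the $n$ sites carrying $A$ and then using $\bbE_{U_g}(u_g\str \cdot\, u_g)$-type relations together with $\bbE_{\idtyty}(\idtyty)=\idtyty$; concretely each of the $n$ transfer operators for the identity leaves $u_g\str(\cdot)u_g$ invariant once we push the $u_g$'s out. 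Plugging this in and using cyclicity of the trace gives $\Theta^{[1,\infty)}_g(\omega)(A)=\Tr(u_g\str\sigma_\omega u_g\,\bbE_A(\idtyty))$, which by uniqueness of the density matrix in \eqref{density_matrix} identifies the new density matrix as ${\rm Ad}_{u_g\str}(\sigma_\omega)=u_g\str\sigma_\omega u_g$. This is exactly statement~2.

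For part~1, I would combine Lemma~\ref{lma:Implementable} and the discussion immediately preceding the theorem with part~2. We already know $\caG^+$ is $U_g^+$-invariant and that $U_g^+\upharpoonright_{\caG^+}$ coincides, up to unitary equivalence, with the representation $\Theta^{[1,\infty)}_g$ acting on $\caS^{[1,\infty)}$ realized on the GNS space of the half-chain. The orbit of $\omega\upharpoonright_{\caA^\Rchain}$ is, by part~2, in bijection with the orbit $\{{\rm Ad}_{u_g\str}\sigma_\omega : g\in G\}$ in $\caB=\caM_k$. Since $\sigma_\omega$ can be taken to be (a scalar multiple of) the state $\rho$ from \eqref{E1}, which commutes with every $u_g$, the orbit of the \emph{state} is a single point; but the relevant representation-theoretic content is carried not by the orbit of $\sigma_\omega$ as a functional but by the action of $u_g$ on the auxiliary space $\bbC^k$ itself via the Stinespring-type dilation $V$. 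More precisely, the GNS space of $\omega\upharpoonright_{\caA^\Rchain}$ decomposes, by the minimal finitely correlated structure, as $\caB\otimes(\text{infinite-dimensional multiplicity space})$ — each application of $\bbE$ tensors on one more copy of the bulk Hilbert space — and the $G$-action is $u_g$ on the $\caB\cong\bbC^k$ factor and trivial on the multiplicity space. Hence $U_g^+\upharpoonright_{\caG^+}$ is unitarily equivalent to $u_g\otimes\idtyty_\infty$, i.e. it contains only the representation $u_g$, with infinite multiplicity.

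The main obstacle I anticipate is making the identification of $\caG^+$ with $\caB\otimes(\text{multiplicity})$ clean enough to read off the $G$-action: one must verify that under the isometry intertwining the GNS representation of $\omega\upharpoonright_{\caA^\Rchain}$ with the purely generated minimal form, the unitary $U_g^+$ (constructed in Theorem~\ref{thm:ExcessSpin_FCS} as a strong limit of $\ep{\iu g\pi_\omega(S^+(L))}$) really corresponds to $u_g$ on the auxiliary index and not to something twisted by the bulk $U_g$'s. The cleanest route is probably to re-derive this from the limit formula \eqref{FCS_U_Limit}: for $A,B$ supported in $(-\infty,0]$ one has $\langle\pi_\omega(A)\Omega_\omega, U_g^+\pi_\omega(B)\Omega_\omega\rangle = \lim_L\omega(A\str U_g^+(L)B)$, and the right-hand side, evaluated via the transfer operator with the tail collapsing to $Q_g(b)=\Tr(\rho b)u_g$, manifestly produces the vector $u_g$ in the auxiliary slot — so the limiting operator acts on $\caG^+$ as ${\rm Ad}$ of $u_g$ in exactly the sense needed, giving $u_g$ with infinite multiplicity. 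Everything else (strong continuity, the multiplicity being infinite because $\caA^{(-\infty,0]}$ is infinite-dimensional) is already in hand from the earlier sections.
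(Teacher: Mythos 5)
Your proof of part~2 takes exactly the paper's route: iterate the intertwining relation~(\ref{intertwine}) across the sites supporting $A$ to get the covariance $\bbE_{\theta_g^\Rchain(A)}(\idtyty)=u_g\,\bbE_A(\idtyty)\,u_g^*$, then move $\mathrm{Ad}_{u_g^*}$ onto $\sigma_\omega$ by cyclicity of the trace. For part~1 your closing paragraph again identifies the paper's argument: the limit formula~(\ref{FCS_U_Limit}) evaluated on $A_1,A_2\in\caA^\Lchain$ gives
\begin{equation*}
\braket{\pi_\omega(A_1)\Omega_\omega}{U^+_g\pi_\omega(A_2)\Omega_\omega}=\Tr\bigl(\rho\,\bbE_{A_1^*A_2}(u_g)\bigr),
\end{equation*}
so the matrix coefficients of $U_g^+\upharpoonright_{\caG^+}$ are expressed through $u_g$. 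The intermediate ``more precise'' step you propose, a factorization $\caG^+\cong\caB\otimes(\text{multiplicity})$ with $G$ acting as $u_g\otimes\idtyty$, is not justified as stated and is not what is needed. What actually closes the argument, and what the paper invokes explicitly while you only gesture at it with the phrase ``minimal finitely correlated structure,'' is the minimality span condition
\begin{equation*}
\mathrm{span}\bigl\{\rho\circ\bbE_{A_1}\circ\cdots\circ\bbE_{A_n}:\ n\in\bbN,\ A_i\in\caA^0\bigr\}=\caB^*.
\end{equation*}
This guarantees that as $A_1^*A_2$ ranges over the left-chain local algebra the functionals $\rho\circ\bbE_{A_1^*A_2}$ exhaust all of $\caB^*$, so that the matrix coefficient functions $g\mapsto\braket{\pi_\omega(A_1)\Omega_\omega}{U_g^+\pi_\omega(A_2)\Omega_\omega}$ coincide, as a subspace of functions on $G$, with those of $u_g$. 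Without it you would only know that the irreducible content of $U_g^+\upharpoonright_{\caG^+}$ is \emph{contained in} that of $u_g$, not that every constituent of $u_g$ actually appears. With it, the representation types match exactly, and the infinite multiplicity follows from $\dim\caG^+=\infty$. So: same approach as the paper, part~2 complete, part~1 correct in spirit but missing the one explicit ingredient (minimality) that turns the matrix element identity into a two-sided statement about representation content.
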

\begin{proof}
1. As a corollary of the proof of Theorem~\ref{thm:ExcessSpin_FCS}, we have that for any $A_1,A_2\in\caA^\Lchain$
\begin{equation}\label{ExcessSpinEquivalence}
\braket{\pi_\omega(A_1)\Omega_\omega}{U_g^+\pi_\omega(A_2)\Omega_\omega} = \braket{\Omega_\omega}{\pi_\omega(A_1\str A_2)U_g^+\Omega_\omega} = \Tr\left(\rho\bbE_{A_1\str A_2}(u_g)\right),
\end{equation}
where the second equality follows from~(\ref{FCS_U_Limit}). Since in the minimal representation of the finitely correlated state,
\begin{equation*}
\mathrm{span}\left\{\rho \circ \bbE_{A_1}\circ\cdots \circ\bbE_{A_n}: n\in\bbN, A_i\in\caA^0 \right\} = \caB\str,
\end{equation*}
equation~(\ref{ExcessSpinEquivalence}) gives a one-to-one correspondence between matrix entries of $U_g^+\upharpoonright_{\caG^+}$ and those of $u_g$. Hence, $u_g$ determines the full set
of representation content of $U_g^+\upharpoonright_{\caG^+}$. As $U_g^+\upharpoonright_{\caG^+}$ is infinite-dimensional, the finite-dimensional representation $u_g$ must appear in it with infinite degeneracy.

\noindent 2. We have
\begin{align*}
\Theta_g^{\Rchain}(\omega)(A) &= \omega(\theta_g^{\Rchain}(A)) = \Tr \left(\sigma_\omega \bbE_{\theta_g^{\Rchain}(A)}(\idtyty)\right)
= \Tr \left(\sigma_\omega \mathrm{Ad}_{u_g} (\bbE_A(\idtyty))\right) \\
&= \Tr \left(\mathrm{Ad}_{u_g^*} (\sigma_\omega) \bbE_A(\idtyty)\right)
\end{align*}
for any $g\in G$.
\end{proof}

We conclude this section with some remarks. First, since $U_g^+$ is a quasi-local observable, $U_g^+\upharpoonright_{\caG^+}$ can be determined from the bulk ground state of the model, computationally through $u_g$ and, in principle, also experimentally. Secondly, we know from the general consideration in Section~\ref{sec:GTransport} that  the maps $\Theta_g^{\Rchain}$ are all equivalent representations within a class of automorphically equivalent $G$-invariant models, so that the same must hold for $u_g$ and the representation content of $U_g^+\upharpoonright_{\caG^+}$. Therefore, within the context of frustration free models with matrix product ground states with a symmetry 
represented by a Lie subgroup of the special unitary groups, Theorem~\ref{thm:FF} shows that not only is the edge representation of the 
symmetry group an invariant for symmetric gapped ground state phases in one dimension, but also that it can be determined, and 
in principle also experimentally observed, from the correlation structure in the bulk. This is what is sometimes referred to as 
symmetry-protected topological order, see \eg~\cite{Chen:2013}.

As an example, let us consider the spin-$1$, $SU(2)$-invariant chains in the Haldane phase, of which the AKLT model~\cite{Affleck:1988vr} is a representative, and whose ground state is finitely correlated~\cite{Fannes:1992vq}. For the AKLT model, the auxiliary representation $u_g$ is the spin-$1/2$ representation, $\caD_{1/2}$, of $SU(2)$. This is also obvious in the valence bond picture of ~\cite{Affleck:1988vr}. Hence, the excess spin observed by rotating the right half-infinite chain, conditioned arbitrarily on the left chain as in~(\ref{ExcessSpinEquivalence}) is also a spin-$1/2$. This is precisely the spin-$1/2$ representation that has been experimentally 
observed in~\cite{Hagiwara:1990wk}. 

It is known that the ground state of the AKLT model can be related to a model with trivial product state 
by a quasi-local automorphism \cite{Schuch:2011ve, Bachmann:2012uu}, but this can obviously not be done 
without breaking the $SU(2)$ symmetry. 
Theorem~\ref{thm:FF} implies the stronger statement that two $SU(2)$-invariant models can be thus connected to each other, without
breaking the $SU(2)$ symmetry or closing the gap, only if their boundary $SU(2)$ representations agree.


\section*{acknowledgments}
B.N. acknowledges stimulating discussions with M. Aizenman about excess charges, spin, and all that,
dating back to the collaboration \cite{Aizenman:1994th}, which turn out to be of great relevance in the context of this work. 
This research was supported in part by the National Science Foundation: S.B. under Grant \#DMS-0757581 and B.N. 
under grant \#DMS-1009502


\end{document}